\newtheorem{thm}{Theorem}
\newtheorem{lem}{Lemma}
\newtheorem{cor}{Corollary}
\newtheorem{defi}{Definition}
\title{The Vertex-Attribute-Constrained Densest $k$-Subgraph Problem}
\author{\name Qiheng Lu \email luqh@virginia.edu \\
      \addr University of Virginia
      \AND
      \name Nicholas D. Sidiropoulos \email nikos@virginia.edu \\
      \addr University of Virginia
      \AND
      \name Aritra Konar \email aritra.konar@kuleuven.be\\
      \addr  KU Leuven}
\begin{document}

\maketitle

\begin{abstract}
Dense subgraph mining is a fundamental technique in graph mining, commonly applied in fraud detection, community detection, product recommendation, and document summarization. In such applications, we are often interested in identifying communities, recommendations, or summaries that reflect different constituencies, styles or genres, and points of view. For this task, we introduce a new variant of the Densest $k$-Subgraph (D$k$S) problem that incorporates the attribute values of vertices. The proposed {\em Vertex-Attribute-Constrained Densest $k$-Subgraph} (VAC-D$k$S) problem retains the NP-hardness and inapproximability properties of the classical D$k$S. Nevertheless, we prove that a suitable continuous relaxation of VAC-D$k$S is tight and can be efficiently tackled using a projection-free Frank--Wolfe algorithm. We also present an insightful analysis of the optimization landscape of the relaxed problem. Extensive experimental results demonstrate the effectiveness of our proposed formulation and algorithm, and its ability to scale up to large graphs. We further elucidate the properties of VAC-D$k$S versus classical D$k$S in a political network mining application, where VAC-D$k$S identifies a balanced and more meaningful set of politicians representing different ideological camps, in contrast to the classical D$k$S solution which is unbalanced and rather mundane.  
\end{abstract}

\section{Introduction}

Dense subgraph detection is a fundamental graph mining primitive that aims to identify highly connected subsets of vertices in a given graph. It has found widespread applications, including fraud detection in consumer reviews, product recommendation, and financial transaction networks \citep{hooi2016fraudar,li2020flowscope,ji2022detecting,chen2022antibenford}, as well as community detection in social networks, topic mining \citep{angel2012dense}, and gene association studies \citep{saha2010dense}.

Many applications can benefit from incorporating vertex attribute values into the formulation \citep{adami2011information,fazzone2022discovering}. Recently, several studies \citep{anagnostopoulos2020spectral,anagnostopoulos2024fair,miyauchi2023densest,kariotakis2025fairnessaware} have proposed various vertex-attribute-aware dense subgraph mining formulations and algorithms. In this work, we address limitations in the problem formulations and algorithms of existing studies on attribute-constrained dense subgraph mining, which are discussed in detail in Section \ref{sec:related}.

There exist multiple pertinent formulations of dense subgraph mining (see the survey \citep{lanciano2023survey} and references therein). One of the more prominent is the {\em Densest Subgraph} (DSG) problem \citep{goldberg1984finding} which aims to extract the subgraph with the maximum average induced degree. DSG can be solved exactly via maximum-flow, and linear time greedy algorithms backed by approximation guarantees are also available \citep{charikar2000greedy,boob2020flowless,chekuri2022densest}. In addition, recent work \citep{danisch2017large,harb2022faster,nguyen2024multiplicative} has developed a suite of convex optimization algorithms for solving the problem. An alternative is to seek a subgraph that maximizes the minimum (instead of the average) induced degree, which is known as the $k$-core of a graph \citep{seidman1983network}. 

A drawback of DSG and $k$-core is that they often yield large but loosely connected subgraphs \citep{tsourakakis2013denser,shin2016corescope}. A remedy that affords explicit control of subgraph size is the {\em Densest} $k${\em -Subgraph} (D$k$S) problem \citep{feige2001dense}, which seeks a subset of $k$ vertices with the maximum number of edges between them. However, D$k$S is NP-hard and difficult to approximate in the worst case \citep{khot2006ruling,manurangsi2017almost}. The best polynomial-time approximation algorithm for D$k$S offers an $O(n^{1/4 + \epsilon})$-approximation at complexity  $O(n^{1/\epsilon})$ for $\epsilon > 0$ \citep{bhaskara2010detecting}. In light of the problem's difficulty, different convex relaxations of D$k$S have been considered. For example, the work of \citep{ames2015guaranteed,bombina2020convex} considered relaxations based on Semidefinite Programming (SDP), but these are a heavy lift in terms of computation. Various ``lightweight'' continuous relaxations of D$k$S have also been pursued, including the gradient-based approaches in \citep{hager2016projection,sotirov2020solving,liu2024cardinality} and the more involved Lova\'{s}z-ADMM approach in  \citep{konar2021exploring}. However, the tightness of these relaxations has not been investigated. Recently, \citet{lu2025densest} proposed a provably tight continuous relaxation formulation. In the extended version of \citep{lu2025densest}, \citet{lu2024on} analyzed the optimization landscape to demonstrate the advantages of this formulation. Furthermore, the Frank--Wolfe-based algorithm proposed in \citep{lu2024on, lu2025densest} has shown strong performance in both solution quality and scalability.

A very different approximation approach to D$k$S relative to all the above, promoted by \citet{papailiopoulos2014finding}, is to use a low-rank surrogate of the graph adjacency matrix to leverage the so-called {\em Spannogram}---a low-rank ``geometric'' solver for certain bilinear quadratic optimization problems. In practice, using rank as low as two entails complexity $O(n^3)$, which is a challenge for large-scale problems. This approach is therefore essentially limited to using a rank-one approximation; interestingly, this performs quite well in many cases. The approach of \citep{papailiopoulos2014finding} also provides for a simple upper bound on the optimal edge density of D$k$S, which gives us a problem-instance-dependent approximation gap bound.  

\subsection{Related Work: Attribute-Constrained Dense Subgraph Mining} \label{sec:related}

While dense subgraph discovery is a well-studied topic, only recently has the problem of extracting vertex-attribute-constrained dense subgraphs gained attention \citep{anagnostopoulos2020spectral, anagnostopoulos2024fair, miyauchi2023densest, kariotakis2025fairnessaware}. These works are motivated by the fact that subgraphs extracted via DSG or its variants may violate attribute-based requirements, as such formulations do not explicitly consider vertex attributes. To address this limitation, \citep{anagnostopoulos2020spectral, anagnostopoulos2024fair, miyauchi2023densest, kariotakis2025fairnessaware} have proposed formulations and algorithms that incorporate vertex-attribute constraints into dense subgraph mining. These were among the first efforts to introduce vertex-attribute constraints into the task. Nonetheless, the area remains largely underexplored, with many key challenges still open.

\citet{anagnostopoulos2020spectral, anagnostopoulos2024fair} proposed a vertex-attribute-constrained variant of DSG, where each vertex belongs to a group, and the objective is to identify a subgraph with maximum average induced degree that includes an equal number of vertices from each group. The spectral relaxation algorithms introduced in these works offer meaningful approximation guarantees, provided the degree distribution of the input graph is approximately uniform. However, real-world graphs often exhibit highly skewed degree distributions \citep{newman2003structure}, and the theoretical guarantees apply only when the vertex attribute takes on exactly two distinct values.

A subsequent formulation, introduced as Problem 1 in \citep{miyauchi2023densest}, extends the setting of \citep{anagnostopoulos2020spectral, anagnostopoulos2024fair} by allowing a variable minimum representation level across groups within the selected subgraph (i.e., a lower bound on the proportion of selected vertices belonging to each group). While this formulation represents a meaningful generalization, it still has two notable limitations. First, it enforces a uniform minimum representation level across all groups, which restricts the flexibility to specify different representation requirements based on application needs. Second, as it is based on the DSG framework, the extracted subgraphs tend to be large but loosely connected---a known drawback of DSG-based formulations \citep{tsourakakis2013denser}.

To solve the problem, \citet{miyauchi2023densest} proposed a two-stage $\Omega(1/\sqrt{n})$-approximation algorithm for this formulation. In the first stage, an initial solution is obtained using a Densest-at-least-$k$ Subgraph (Dal$k$S) algorithm with a known approximation guarantee---either $1/3$ or $1/2$, depending on the specific method \citep{andersen2009finding,khuller2009finding}. This solution is then refined in the second stage through a post-processing procedure that incrementally adds vertices until the attribute constraint is satisfied. Since the attribute constraints enforced in the post-processing stage restrict the feasible solution space, the optimal edge density under these constraints is already no greater than that of the Dal$k$S problem considered in the first stage. The fact that the approximation ratio further drops---from a constant factor to $\Omega(1/\sqrt{n})$---suggests that the post-processing step may significantly compromise the edge density of the resulting subgraph.

\citet{miyauchi2023densest} also considered an alternative DSG-based formulation, introduced as Problem 2 in their work, which models attribute constraints by allowing the number of selected vertices from each group to be explicitly specified. While this formulation offers greater flexibility, the proposed approximation algorithm suffers from poor scalability—reportedly requiring over 10,000 seconds to process a graph with only 126 vertices. In addition, this formulation also exhibits the same limitation as other DSG-based formulations, namely the tendency to produce large but loosely connected subgraphs. Because the size of the subgraph cannot be controlled, it is also not possible to ensure that the proportion of vertices selected from each group exceeds a non-trivial threshold.

To circumvent the NP-hardness of the formulations in \citep{anagnostopoulos2020spectral,anagnostopoulos2024fair,miyauchi2023densest}, \citet{kariotakis2025fairnessaware} proposed two regularization-based formulations for incorporating vertex attributes which are solvable in polynomial time. However, like other DSG-based methods, their approach lacks explicit control over the size of the extracted subgraphs. Moreover, the approach only applies to the case of binary vertex attributes and requires bisection search to determine the regularization parameter that ensures that the extracted subgraph satisfies a desired representation level of the vertices.

We also note that a recent paper \citep{oettershagen2024finding} considered a variant of the DSG problem for networks with multiple {\em edge} (as opposed to vertex) attributes that represent different kinds of relationships between vertices. \citet{oettershagen2024finding} proposed formulations for finding the densest subgraph that contains exactly, at most, or at least a specified number of edges for each edge attribute. They showed that the decision versions of these problems are NP-complete and developed a linear-time constant-factor approximation algorithm, which, however, only applies to {\em everywhere sparse} graphs---a restrictive assumption in the context of dense subgraph mining. To summarize, their formulation differs significantly from ours: it focuses on edge-attribute rather than vertex-attribute constraints and is based on DSG instead of D$k$S. Additionally, our theoretical analysis does not rely on the everywhere sparse assumption.

\subsection{Our Contributions}

The main contributions of this paper are fourfold:
\begin{itemize}
    \item We propose a new variant of the Densest $k$-Subgraph problem, termed the Vertex-Attribute-Constrained Densest $k$-Subgraph (VAC-D$k$S) problem, which explicitly incorporates vertex attribute information into the subgraph selection process. Compared to existing approaches, our formulation enables explicit control over the subgraph size, as well as lower bounds on the number of selected vertices from each group. This prevents the extraction of large but loosely connected subgraphs and enables independent control over each group’s selection, guaranteeing that its representation exceeds a non-trivial proportion.
    \item Although the VAC-D$k$S problem is NP-hard, we prove that a natural relaxation is tight and analyze the optimization landscape of the relaxed problem. Both results build upon, but constitute non-trivial generalizations of an analogous relaxation of the classical unweighted D$k$S problem studied in \citep{lu2024on}. The main challenge is that the constraints of the relaxation of VAC-D$k$S are more involved, owing to the need to ensure that the representation level of each vertex group meets its target. Our key technical contribution is a more sophisticated rounding technique that is used to characterize the local and global maximizers of the relaxed problem in order to establish tightness and analyze the optimization landscape.
    \item To ensure scalability to large datasets, we seek efficient gradient-based methods to find high-quality solutions of the VAC-D$k$S relaxation. However, a key computational bottleneck is the cost of computing projections onto the constraint set during each iteration, which requires using general-purpose convex optimization solvers owing to the complex structure of the constraint set. To circumvent this bottleneck, we demonstrate that the projection-free Frank--Wolfe algorithm \citep{frank1956algorithm,jaggi2013revisiting,lacoste2016convergence} is well-suited for the problem. It enables the computation of feasible ascent directions in {\em closed form}, which significantly reduces the computational cost of each iteration. We showcase its effectiveness in obtaining high-quality solutions and scalability across various scenarios.
    \item We demonstrate that our algorithm effectively uncovers more meaningful subgraphs with balanced political representation while simultaneously picking tone-setting politicians on a real-world Greek political network. Such an outcome is not attained by the classical D$k$S formulation, which tends to select ideologically skewed, less meaningful subsets that miss much of the political action. 
\end{itemize}

\subsection{Notation}

In this paper, lowercase roman type letters, lowercase bold type letters, uppercase bold type letters, and uppercase calligraphic type letters denote scalars, vectors, matrices, and sets or pairs of sets, respectively. $[n]$ denotes the set $\{ 1, 2, \dots, n \}$. $| \cdot |$ denotes the cardinality of a set. $x_{i}$ denotes the $i$-th entry of the vector $\boldsymbol{x}$. $a_{ij}$ denotes the entry in the $i$-th row and $j$-th column of matrix $\boldsymbol{A}$. $\boldsymbol{x}^{(t)}$ denotes the vector $\boldsymbol{x}$ at the $t$-th iteration. $\boldsymbol{x}^{\mathsf{T}}$ denotes the transpose of $\boldsymbol{x}$. $\text{top}_{k} (\boldsymbol{x}, \mathcal{C})$ denotes the index set of the top $k$ entries corresponding to the set index $\mathcal{C}$ in $\boldsymbol{x}$. $\boldsymbol{x} [\mathcal{C}] \gets i$ denotes assigning the value $i$ to the entries corresponding to the index set $\mathcal{C}$ in $\boldsymbol{x}$.

\section{Problem Statement}

Consider a weighted, undirected, and simple graph $\mathcal{G} = (\mathcal{V}, \mathcal{E}, w)$, where $\mathcal{V}$ is the set of $n = |\mathcal{V}|$ vertices and $\mathcal{E}$ is the set of $m = |\mathcal{E}|$ edges with weights defined by $w$. Let $\mathcal{C} = \{ c_{1}, c_{2}, \dots, c_{r} \}$ be a set of $r$ different attribute values and $\ell: \mathcal{V} \to \mathcal{C}$ be a mapping from a vertex to the corresponding attribute value. Each vertex in the graph $\mathcal{G}$ is assigned an attribute value from the set $\mathcal{C}$ by the mapping $\ell$. Let $\mathcal{C}_{i} = \{ j \in \mathcal{V} \mid \ell (j) = c_{i} \}$ denote the set of vertices whose attribute is $c_{i}$ for every $i \in [r]$. Formally, the Vertex-Attribute-Constrained Densest $k$-Subgraph (VAC-D$k$S) Problem can be defined as follows.

\begin{defi}[Vertex-Attribute-Constrained Densest $k$-Subgraph (VAC-D$k$S) Problem]
    Given a weighted, undirected, and simple graph $\mathcal{G} = (\mathcal{V}, \mathcal{E}, w)$, a partition of $\mathcal{V}$ into $r$ subsets $\mathcal{C}_{1}, \mathcal{C}_{2}, \dots, \mathcal{C}_{r}$ based on vertex attribute values,\footnote{Throughout this paper, the term {\em group} refers to one of these subsets, i.e., the collection of vertices having the same attribute value.} and non-negative integers $k, k_{1}, k_{2}, \dots, k_{r}$. VAC-D$k$S seeks a subset of $k$ vertices that includes at least $k_{i}$ vertices from each group $\mathcal{C}_{i}$ for every $i \in [r]$, and which maximizes the total edge weight (or the number of edges in the case of unweighted graphs) in the induced subgraph of $\mathcal{G}$. Without loss of generality, $1 \leq k \leq n$, $1 \leq r \leq n$, $0 \leq k_{i} \leq |\mathcal{C}_{i}|$, $\forall i \in [r]$, and $\sum_{i \in [r]} k_{i} \leq k$.
\end{defi}

Let $\boldsymbol{x} \in \{ 0, 1 \}^{n}$ be an indicator vector of a subset of $\mathcal{V}$. VAC-D$k$S can be formulated as
\begin{equation}
    \begin{aligned}
        \max_{\boldsymbol{x} \in \mathbb{R}^{n}} \quad & f (\boldsymbol{x}) = \boldsymbol{x}^{\mathsf{T}} \boldsymbol{A} \boldsymbol{x}\\
        \textrm{s.t.} \quad & \boldsymbol{x} \in \mathcal{B}_{k}^{n} \cap \mathcal{F}, 
        \label{p1}
    \end{aligned}
\end{equation}

where $\boldsymbol{A} \in \mathbb{R}^{n \times n}$ is the weighted adjacency matrix of $\mathcal{G}$, $\mathcal{B}_{k}^{n} = \{ \boldsymbol{x} \in \mathbb{R}^{n} \mid \boldsymbol{x} \in \{0, 1\}^{n}, \sum_{i \in [n]} x_{i} = k \}$, and $\mathcal{F} = \{ \boldsymbol{x} \in \mathbb{R}^{n} \mid \sum_{i \in \mathcal{C}_{j}} x_{i} \geq k_{j}, \forall j \in [r] \}$.

Compared with the existing formulations in \citep{anagnostopoulos2020spectral,anagnostopoulos2024fair,miyauchi2023densest,kariotakis2025fairnessaware}, our formulation offers the following advantages:

\begin{itemize}
    \item The formulations in \citep{anagnostopoulos2020spectral,anagnostopoulos2024fair,miyauchi2023densest,kariotakis2025fairnessaware} do not allow explicit control of the subgraph size, with the result that they can extract large but loosely connected subgraphs.
    \item Compared with \citep{anagnostopoulos2020spectral,anagnostopoulos2024fair} and Problem 1 in \citep{miyauchi2023densest}, which impose a common upper bound on the proportion of each vertex group in the solution, our formulation allows more flexible control over group composition tailored to end-user requirements by appropriate variation of the size parameters $\{k_i\}_{i=1}^{r}$ and $k$.
    \item Problem 2 in \citep{miyauchi2023densest} allows setting a lower bound on the number of vertices from each group in the solution, but without controlling the subgraph size, it cannot ensure a meaningful lower bound on group proportions. Our formulation addresses this by jointly constraining subgraph size and group representation.
    \item The formulations in \citep{kariotakis2025fairnessaware} adjust group composition through regularization and support only a single attribute constraint. Furthermore, ensuring that the extracted subgraph satisfies a target group proportion requires tuning the regularization parameter via bisection-search, thereby increasing complexity and limiting applicability. In contrast, our formulation requires no parameter tuning (the size parameters are directly specified as problem input) and naturally handles multiple attribute constraints.
\end{itemize}

When considering ways to constrain subgraph size, at-least-$k$ and at-most-$k$ are two alternatives. However, both present notable drawbacks in our setting. At-least-$k$ constraints, similar to DSG-based formulations, tend to select large but loosely connected subgraphs, which cannot guarantee meaningful lower bounds on group proportions. Meanwhile, at-most-$k$ constraints may result in much smaller solutions, limiting their practical significance. Therefore, we focus on the exact-$k$ constraint, which allows us to precisely control the subgraph size and ensure meaningful group proportions.

\section{Main Theoretical Results}

\begin{thm}
    VAC-D$k$S is NP-hard, and at least as difficult to approximate as D$k$S.
\end{thm}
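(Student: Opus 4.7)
The plan is to establish both claims by exhibiting a trivial reduction that embeds any instance of classical D$k$S as a special case of VAC-D$k$S. The guiding observation is that the attribute constraint set $\mathcal{F}$ becomes vacuous under a suitable choice of the partition and the per-group lower bounds $\{k_i\}_{i=1}^{r}$, so the VAC-D$k$S formulation collapses back to the D$k$S formulation. Since D$k$S is already known to be NP-hard \citep{feige2001dense} and has strong known inapproximability lower bounds \citep{khot2006ruling,manurangsi2017almost}, both properties transfer to the more general VAC-D$k$S.

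Concretely, given an arbitrary instance of D$k$S specified by a graph $\mathcal{G} = (\mathcal{V},\mathcal{E},w)$ and a cardinality $k$, I would build the VAC-D$k$S instance on the same $\mathcal{G}$ and $k$, set $r=1$, $\mathcal{C}_1=\mathcal{V}$, and $k_1=0$ (any choice with $k_1 \le k$ works; $k_1=0$ is cleanest). Under this construction, the constraint $\sum_{i \in \mathcal{C}_1} x_i \ge 0$ in the definition of $\mathcal{F}$ is satisfied by every $\boldsymbol{x} \in \mathcal{B}_{k}^{n}$, so $\mathcal{B}_{k}^{n} \cap \mathcal{F} = \mathcal{B}_{k}^{n}$. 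Consequently, problem \eqref{p1} specialized to this instance is exactly the D$k$S formulation with the identical objective $\boldsymbol{x}^{\mathsf{T}} \boldsymbol{A} \boldsymbol{x}$, and the two problems share the same feasible set, the same objective value, and thus the same optimal solutions.

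NP-hardness then follows immediately: an oracle that solves VAC-D$k$S on the constructed instance solves the original D$k$S instance, and the reduction is clearly polynomial (in fact, the transformation is essentially the identity). For the inapproximability claim, I would argue that any polynomial-time $\alpha(n)$-approximation algorithm for VAC-D$k$S, applied to the constructed instance, would return a feasible $\boldsymbol{x}$ that is an $\alpha(n)$-approximation for the original D$k$S instance, because optimal values coincide and every feasible VAC-D$k$S solution of this instance is a feasible D$k$S solution with the same objective value. Hence any approximation lower bound for D$k$S lifts verbatim to VAC-D$k$S.

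The only subtle point to make explicit is that the reduction preserves both the objective value and the approximation ratio, which is immediate once we note that the map between feasible sets is the identity. There is no real technical obstacle: the main step is simply to check that setting $r=1$ and $k_1=0$ renders $\mathcal{F}$ trivial, after which both conclusions follow by composition with the known hardness and inapproximability results for D$k$S.
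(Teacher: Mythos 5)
Your proposal is correct and takes essentially the same route as the paper, which simply observes that D$k$S is the special case of VAC-D$k$S obtained by setting $k_i = 0$ for all $i \in [r]$, thereby making the attribute constraints vacuous. Your additional details (choosing $r=1$, verifying that the feasible sets coincide, and noting that the identity map preserves objective values and approximation ratios) are a fuller write-up of the same one-line argument.
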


\begin{proof}
    D$k$S is a special case of VAC-D$k$S when $k_{i} = 0$, $\forall i \in [r]$.
\end{proof}

Considering that D$k$S is provably difficult to approximate \citep{khot2006ruling,manurangsi2017almost} and the best known polynomial-time approximation algorithm for D$k$S can only achieve an $O(n^{1/4 + \epsilon})$-approximation \citep{bhaskara2010detecting}, relaxing the combinatorial constraint in \eqref{p1} to its convex hull and solving it through numerical optimization algorithms is a natural choice. Hence, we need to first find the convex hull of $\mathcal{B}_{k}^{n} \cap \mathcal{F}$.

\begin{thm} \label{thm1}
    The convex hull of $\mathcal{B}_{k}^{n} \cap \mathcal{F}$ is $\mathcal{D}_{k}^{n} \cap \mathcal{F}$, where $\mathcal{D}_{k}^{n} = \{ \boldsymbol{x} \in \mathbb{R}^{n} | \boldsymbol{x} \in [0, 1]^{n}, \sum_{i \in [n]} x_{i} = k \}$.
\end{thm}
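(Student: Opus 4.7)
The plan is to verify the two set inclusions separately. The inclusion $\mathrm{conv}(\mathcal{B}_k^n \cap \mathcal{F}) \subseteq \mathcal{D}_k^n \cap \mathcal{F}$ is immediate: $\mathcal{B}_k^n \subseteq \mathcal{D}_k^n$, and $\mathcal{D}_k^n \cap \mathcal{F}$ is convex (an intersection of the capped simplex with finitely many half-spaces), so it contains the convex hull of any of its subsets.

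For the reverse inclusion I would use that $\mathcal{D}_k^n \cap \mathcal{F}$ is a bounded polyhedron and therefore coincides with the convex hull of its vertex set; it thus suffices to show that every vertex of $\mathcal{D}_k^n \cap \mathcal{F}$ lies in $\{0,1\}^n$, which would place it in $\mathcal{B}_k^n \cap \mathcal{F}$. I would establish this by a standard dimension/constraint-counting argument. Fix a candidate vertex $\boldsymbol{x}^*$ and define the fractional support $F = \{i \in [n] : 0 < x_i^* < 1\}$, the groups it touches $J_F = \{j \in [r] : \mathcal{C}_j \cap F \neq \emptyset\}$, and the tight-group index set $T = \{j \in [r] : \sum_{i \in \mathcal{C}_j} x_i^* = k_j\}$. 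Suppose for contradiction that $|F| \geq 1$. If I can produce a nonzero direction $\boldsymbol{d} \in \mathbb{R}^n$ supported on $F$ satisfying $\mathbf{1}^\mathsf{T} \boldsymbol{d} = 0$ and $\mathbf{1}_{\mathcal{C}_j}^\mathsf{T} \boldsymbol{d} = 0$ for every $j \in T$, then $\boldsymbol{x}^* \pm \epsilon \boldsymbol{d} \in \mathcal{D}_k^n \cap \mathcal{F}$ for sufficiently small $\epsilon > 0$---the box constraints are strict on $F$, the non-tight group constraints retain positive slack, and the sum and tight-group constraints are preserved by construction---contradicting that $\boldsymbol{x}^*$ is a vertex.

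The main obstacle, and the crux of the argument, is showing that such a $\boldsymbol{d}$ always exists when $|F| \geq 1$. Because the groups $\{\mathcal{C}_j\}_{j=1}^{r}$ partition $[n]$, the restricted indicators $\{\mathbf{1}_{\mathcal{C}_j}|_F\}_{j \in J_F}$ have pairwise disjoint supports on $F$ and sum to $\mathbf{1}|_F$. This lets me compute the combined rank of the equality system on $\boldsymbol{d} \in \mathbb{R}^F$ as $|T \cap J_F|$ when $J_F \subseteq T$ and $|T \cap J_F| + 1$ otherwise, so existence of the desired $\boldsymbol{d}$ reduces to checking that this rank is strictly less than $|F|$ in every configuration. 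I would close this via a short case analysis driven by two elementary arithmetic observations: (i) if a tight group $j \in T$ intersects $F$ in exactly one index, then $\sum_{i \in \mathcal{C}_j} x_i^*$ equals an integer (from the $0/1$ entries off $F$) plus a single fractional entry, which cannot equal $k_j \in \mathbb{Z}$; and (ii) the quantity $\sum_{i \in F} x_i^* = k - |\{i \notin F : x_i^* = 1\}|$ is necessarily an integer, which rules out $|F| = 1$ and, combined with (i), excludes the borderline configurations in which every group in $J_F$ meets $F$ in a single index. Together these force $|F| = 0$, so every vertex is integer-valued, completing the proof.
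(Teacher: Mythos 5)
Your proposal is correct. Both you and the paper reduce the theorem to showing that the extreme points (vertices) of the polytope $\mathcal{D}_k^n \cap \mathcal{F}$ are integral, but the mechanics differ in a way worth noting. The paper invokes Krein--Milman and proves the full characterization ``extreme point $\iff$ integral'': the forward direction via strict convexity of $\Vert\cdot\Vert_2^2$, and the reverse by \emph{explicitly constructing} a swap direction $\boldsymbol{e}_j - \boldsymbol{e}_l$ (two fractional coordinates in the same group if possible, otherwise one fractional coordinate from each of two distinct groups, using the same integrality observation as your item (i) to show the perturbed points stay in $\mathcal{F}$), exhibiting $\boldsymbol{x}$ as a midpoint of two feasible points. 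You instead use the Minkowski vertex representation plus a rank/dimension count on the tight constraints restricted to the fractional support, proving \emph{existence} of a feasible two-sided perturbation direction without constructing it. Your route is slightly leaner in that the inclusion $\mathrm{conv}(\mathcal{B}_k^n\cap\mathcal{F}) \subseteq \mathcal{D}_k^n\cap\mathcal{F}$ is dispatched by convexity alone, so you never need the ``integral $\Rightarrow$ extreme'' half; the disjoint-support observation makes the rank computation clean, and your two arithmetic facts (a tight group cannot meet the fractional support in exactly one index; the fractional support carries integer total mass) close all borderline cases, including $|F|=1$. What the paper's explicit construction buys in exchange is reusability: essentially the same coordinate-swap move, augmented with a choice of which coordinate to increase, is the rounding engine behind Theorem 2 and Lemma 1, so the constructive form is not incidental to the rest of the paper.
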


\begin{proof}
    Please refer to Appendix \ref{app1}.
\end{proof}

Since $\mathcal{D}_{k}^{n} \cap \mathcal{F}$ is the convex hull of $\mathcal{B}_{k}^{n} \cap \mathcal{F}$, we relax \eqref{p1} to the following continuous optimization problem
\begin{equation}
    \begin{aligned}
        \max_{\boldsymbol{x} \in \mathbb{R}^{n}} \quad & f (\boldsymbol{x}) = \boldsymbol{x}^{\mathsf{T}} \boldsymbol{A} \boldsymbol{x}\\
        \textrm{s.t.} \quad & \boldsymbol{x} \in \mathcal{D}_{k}^{n} \cap \mathcal{F}. 
        \label{p2}
    \end{aligned}
\end{equation}

However, Corollary 2 in \citep{lu2024on} shows that $\arg \max_{\boldsymbol{x} \in \mathcal{B}_{k}^{n}} f (\boldsymbol{x}) \subseteq \arg \max_{\boldsymbol{x} \in \mathcal{D}_{k}^{n}} f (\boldsymbol{x})$ does not hold in general for unweighted D$k$S. Since unweighted D$k$S is a special case of VAC-D$k$S, the relaxation from \eqref{p1} to \eqref{p2} is also not tight for VAC-D$k$S in general. In this paper, we adopt the definition of tightness of a relaxation from \citep{lu2024on}, i.e., a relaxation is tight if every optimal solution to the original problem remains optimal for the relaxed problem.

A technique known as {\em Diagonal Loading} has been used in \citep{yuan2013truncated,barman2015approximating,hager2016projection,liu2024cardinality,lu2024on} to either guarantee the tightness of D$k$S relaxation or improve the solution quality. The starting point is that for D$k$S, we may {\em equivalently} reformulate
\begin{equation}
    \begin{aligned}
        \max_{\boldsymbol{x} \in \mathbb{R}^{n}} \quad & f (\boldsymbol{x}) = \boldsymbol{x}^{\mathsf{T}} \boldsymbol{A} \boldsymbol{x}\\
        \textrm{s.t.} \quad & \boldsymbol{x} \in \mathcal{B}_{k}^{n}, 
    \end{aligned}
\end{equation}
as 
\begin{equation}
    \begin{aligned}
        \max_{\boldsymbol{x} \in \mathbb{R}^{n}} \quad & g (\boldsymbol{x}) = \boldsymbol{x}^{\mathsf{T}} (\boldsymbol{A} + \lambda \boldsymbol{I}) \boldsymbol{x}\\
        \textrm{s.t.} \quad & \boldsymbol{x} \in \mathcal{B}_{k}^{n}, 
        \label{p3}
    \end{aligned}
\end{equation}
where $\lambda$ is a non-negative diagonal loading parameter. Relaxing \eqref{p3} yields
\begin{equation}
    \begin{aligned}
        \max_{\boldsymbol{x} \in \mathbb{R}^{n}} \quad & g (\boldsymbol{x}) = \boldsymbol{x}^{\mathsf{T}} (\boldsymbol{A} + \lambda \boldsymbol{I}) \boldsymbol{x}\\
        \textrm{s.t.} \quad & \boldsymbol{x} \in \mathcal{D}_{k}^{n}.
        \label{p4}
    \end{aligned}
\end{equation}

Recently, \citet{lu2024on} proved that $\arg \max_{\boldsymbol{x} \in \mathcal{B}_{k}^{n}} g (\boldsymbol{x}) \subseteq \arg \max_{\boldsymbol{x} \in \mathcal{D}_{k}^{n}}  g (\boldsymbol{x})$ holds for every unweighted, undirected, and simple graph $\mathcal{G}$ and every $k$ if and only if the diagonal loading parameter $\lambda \geq 1$. \citet{lu2024on} further showed the impact of $\lambda$ on the optimization landscape of \eqref{p4}, suggesting that a larger $\lambda$ can make the optimization landscape more challenging.

Similarly, \eqref{p1} can be equivalently reformulated as
\begin{equation}
    \begin{aligned}
        \max_{\boldsymbol{x} \in \mathbb{R}^{n}} \quad & g (\boldsymbol{x}) = \boldsymbol{x}^{\mathsf{T}} (\boldsymbol{A} + \lambda \boldsymbol{I}) \boldsymbol{x}\\
        \textrm{s.t.} \quad & \boldsymbol{x} \in \mathcal{B}_{k}^{n} \cap \mathcal{F}. 
        \label{p5}
    \end{aligned}
\end{equation}

By relaxing \eqref{p5}, we obtain
\begin{equation}
    \begin{aligned}
        \max_{\boldsymbol{x} \in \mathbb{R}^{n}} \quad & g (\boldsymbol{x}) = \boldsymbol{x}^{\mathsf{T}} (\boldsymbol{A} + \lambda \boldsymbol{I}) \boldsymbol{x}\\
        \textrm{s.t.} \quad & \boldsymbol{x} \in \mathcal{D}_{k}^{n} \cap \mathcal{F}. 
        \label{p6}
    \end{aligned}
\end{equation}

To achieve higher-quality results in solving problem \eqref{p6} using numerical optimization algorithms, we need to analyze the impact of the diagonal loading parameter $\lambda$ on the tightness of the relaxation from \eqref{p5} to \eqref{p6} and the optimization landscape of \eqref{p6}.

Note that if the relaxation from \eqref{p5} to \eqref{p6} is tight when $\lambda = \lambda^{\ast}$, where $\lambda^{\ast}$ represents the minimum value of the diagonal loading parameter to guarantee the tightness from \eqref{p5} to \eqref{p6}, then the sets of optimal solutions of \eqref{p5} and \eqref{p6} are the same when $\lambda > \lambda^{\ast}$ because $\Vert \boldsymbol{x} \Vert_{2}^{2} = k$ if and only if $\boldsymbol{x} \in \mathcal{B}_{k}^{n} \cap \mathcal{F}$, within the domain $\mathcal{D}_{k}^{n} \cap \mathcal{F}$. Therefore, we only need to derive the minimum value of the diagonal loading parameter to guarantee the tightness from \eqref{p5} to \eqref{p6}.

\subsection{Tightness of the Relaxation}

Corollary 2 in \citep{lu2024on} constructs counterexamples to establish that $\lambda  = 1$ is a lower bound for the minimum value of the diagonal loading parameter to ensure the relaxation from \eqref{p3} to \eqref{p4} is tight for unweighted graphs. For weighted graphs, consider a graph in which all edge weights are identical. In this case, since the structure is equivalent (up to a scaling) to an unweighted graph, we can apply the same construction to show that $\lambda = w_{\max}$ serves as a lower bound on the diagonal loading parameter to ensure tightness of the relaxation from \eqref{p3} to \eqref{p4}, where $w_{\max}$ denotes the maximum edge weight in $\mathcal{G}$. Since D$k$S is a special case of VAC-D$k$S, this also implies that $\lambda = w_{\max}$ serves as a lower bound for the tightness of the relaxation from \eqref{p5} to \eqref{p6}.

We next consider an upper bound on the minimum value of the diagonal loading parameter. There are two key challenges in deriving this upper bound: handling the attribute constraints introduced in VAC-D$k$S, and managing the complication introduced by edge weights.

\begin{thm} \label{thm4}
    Given any $\lambda \geq w_{\max}$ and a non-integral feasible $\boldsymbol{x}$ of \eqref{p6}, we can always find an integral feasible $\boldsymbol{x}'$ of \eqref{p6} such that $g (\boldsymbol{x}') \geq g (\boldsymbol{x})$.
\end{thm}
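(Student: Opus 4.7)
The plan is to iteratively round a non-integral feasible $\boldsymbol{x}$ to an integral feasible $\boldsymbol{x}'$ while not decreasing $g$. The key algebraic identity I will rely on is
\[
(\boldsymbol{e}_i-\boldsymbol{e}_j)^{\mathsf{T}}(\boldsymbol{A}+\lambda\boldsymbol{I})(\boldsymbol{e}_i-\boldsymbol{e}_j)=2(\lambda-a_{ij})\geq 2(\lambda-w_{\max})\geq 0,
\]
so for any two indices $i\neq j$ the restriction of $g$ to the line $\boldsymbol{x}+t(\boldsymbol{e}_i-\boldsymbol{e}_j)$ is a convex quadratic in $t$. The maximum of $g$ on any feasible segment of such a coordinate-swap direction is therefore attained at one of the two endpoints, and in particular at least one endpoint gives a value no less than $g(\boldsymbol{x})$.

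I would first carry out an \emph{intra-group} rounding phase. Whenever two non-integer coordinates $x_i,x_j\in(0,1)$ lie in the same group $\mathcal{C}_s$, the direction $\boldsymbol{e}_i-\boldsymbol{e}_j$ preserves the total sum and every group sum, so feasibility along this line reduces to the box constraints. Pushing $t$ to the boundary of $[-\min(x_i,1-x_j),\,\min(x_j,1-x_i)]$ keeps $\boldsymbol{x}$ in $\mathcal{D}_k^n\cap\mathcal{F}$ and drives $x_i$ or $x_j$ into $\{0,1\}$; choosing the better of the two endpoints gives $g(\boldsymbol{x}')\geq g(\boldsymbol{x})$ while strictly reducing the number of non-integer entries.

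After this phase, each group contains at most one non-integer coordinate. Because $\sum_i x_i=k\in\mathbb{Z}$ forces the number of non-integer entries to be either $0$ or at least $2$, if any fractional entries remain I pick two of them, $x_i,x_j$, necessarily in distinct groups $\mathcal{C}_s,\mathcal{C}_{s'}$. The crucial slack lemma is: since $x_j$ is the only non-integer in $\mathcal{C}_{s'}$, the sum $\sum_{l\in\mathcal{C}_{s'}}x_l$ equals an integer plus $x_j$; feasibility together with the integrality of that integer part forces it to be at least $k_{s'}$, so $\sum_{l\in\mathcal{C}_{s'}}x_l - k_{s'}\geq x_j$, and a symmetric bound holds for $\mathcal{C}_s$. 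This guarantees that shifting $t$ units of mass from $j$ to $i$ is feasible for all $t\in[0,\min(x_j,1-x_i)]$, and symmetrically in the reverse direction; the endpoint again turns at least one of $x_i,x_j$ into an integer, and the convexity-along-swap-directions property supplies a new feasible point $\boldsymbol{x}'$ with $g(\boldsymbol{x}')\geq g(\boldsymbol{x})$ and strictly fewer fractional entries.

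Iterating the two phases terminates in at most $n$ steps at an integral feasible $\boldsymbol{x}'$ with $g(\boldsymbol{x}')\geq g(\boldsymbol{x})$. The main obstacle is the inter-group step, specifically the slack lemma that guarantees enough room in $\mathcal{C}_{s'}$ to push $x_j$ all the way to $0$ (or $x_i$ to $1$) without breaking the group constraint. This is where the attribute constraints of VAC-D$k$S genuinely complicate the classical argument of \citep{lu2024on} and where the more sophisticated rounding technique alluded to in our contributions appears; the remainder of the proof is a direct adaptation of the convex-along-swap-directions reasoning used in the unconstrained D$k$S case.
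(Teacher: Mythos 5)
Your proposal is correct and follows essentially the same route as the paper's proof: an intra-group rounding phase followed by an inter-group phase, with feasibility of the inter-group swap secured by exactly the paper's observation that when a group has a single fractional coordinate, the integral part of its sum already meets $k_i$, and with the non-decrease of $g$ coming from $(\boldsymbol{e}_i-\boldsymbol{e}_j)^{\mathsf{T}}(\boldsymbol{A}+\lambda\boldsymbol{I})(\boldsymbol{e}_i-\boldsymbol{e}_j)=2(\lambda-a_{ij})\geq 0$. The only cosmetic difference is that you invoke convexity of $g$ along the swap line to pick the better endpoint, whereas the paper preselects the direction via $\lambda x_j+s_j\geq\lambda x_l+s_l$ and verifies the increase explicitly; the two arguments are equivalent.
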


\begin{proof}
    Please refer to Appendix \ref{app4}.
\end{proof}

\begin{cor}\label{cor1}
    If $\lambda \geq w_{\max}$, then there always exists an integral global maximizer of \eqref{p6}, which implies that the relaxation from \eqref{p5} to \eqref{p6} is tight when $\lambda \geq w_{\max}$.
\end{cor}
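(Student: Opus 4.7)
The plan is to derive Corollary \ref{cor1} directly from Theorem \ref{thm4} via a short two-step argument: first establish the existence of an integral global maximizer of \eqref{p6}, then translate that existence statement into tightness in the sense adopted by the paper.

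For the first step, I would note that \eqref{p6} maximizes a continuous quadratic $g$ over $\mathcal{D}_{k}^{n} \cap \mathcal{F}$, which is a bounded polytope and hence compact; the standing assumption $\sum_{i \in [r]} k_{i} \leq k$ ensures this set is nonempty (for instance, the uniform point $x_i = k/n$ satisfies $\sum_{i \in \mathcal{C}_j} x_i = k|\mathcal{C}_j|/n \geq k_j$ whenever $k_j \leq |\mathcal{C}_j|$, and otherwise one can construct a feasible point directly from the group quotas). By the extreme value theorem, a global maximizer $\boldsymbol{x}^{\star}$ of \eqref{p6} exists. If $\boldsymbol{x}^{\star}$ happens to be integral, there is nothing to do. Otherwise, applying Theorem \ref{thm4} with $\lambda \geq w_{\max}$ produces an integral feasible $\boldsymbol{x}' \in \mathcal{B}_{k}^{n} \cap \mathcal{F}$ with $g(\boldsymbol{x}') \geq g(\boldsymbol{x}^{\star})$; since $\boldsymbol{x}^{\star}$ already attains the maximum of $g$ over the feasible set of \eqref{p6} and $\boldsymbol{x}'$ is feasible for \eqref{p6}, equality must hold and $\boldsymbol{x}'$ is itself an integral global maximizer of \eqref{p6}.

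For the second step, I would derive tightness from this existence statement using the paper's definition: tightness means that every optimal solution $\boldsymbol{x}^{\dagger}$ of \eqref{p5} remains optimal for \eqref{p6}. Since $\mathcal{B}_{k}^{n} \cap \mathcal{F} \subseteq \mathcal{D}_{k}^{n} \cap \mathcal{F}$, any such $\boldsymbol{x}^{\dagger}$ is feasible for \eqref{p6}, so $g(\boldsymbol{x}^{\dagger}) \leq g(\boldsymbol{x}')$. Conversely, the integral maximizer $\boldsymbol{x}'$ from the first step is feasible for \eqref{p5}, so by optimality of $\boldsymbol{x}^{\dagger}$ we also have $g(\boldsymbol{x}') \leq g(\boldsymbol{x}^{\dagger})$. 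Sandwiching yields $g(\boldsymbol{x}^{\dagger}) = g(\boldsymbol{x}')$, the common optimal value of \eqref{p5} and \eqref{p6}, which establishes that $\boldsymbol{x}^{\dagger}$ is optimal for \eqref{p6} as well.

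There is no real obstacle to clear here, since the technical content is concentrated in Theorem \ref{thm4}; the only point requiring mild care is to ensure both directions of the argument are made explicit, namely (i) that \emph{some} integral optimizer of \eqref{p6} exists, and (ii) that this forces \emph{every} optimizer of \eqref{p5}---which is a priori only known to be integral and feasible for \eqref{p6}---to achieve the relaxed optimum. Separating these two statements keeps the logic clean and matches the wording of the corollary.
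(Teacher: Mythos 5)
Your argument is correct and matches the paper's intended derivation: Corollary~\ref{cor1} is stated without a separate proof precisely because it follows from Theorem~\ref{thm4} in the way you describe---compactness of the nonempty polytope $\mathcal{D}_{k}^{n} \cap \mathcal{F}$ gives a global maximizer, Theorem~\ref{thm4} upgrades it to an integral one with the same objective value, and the resulting equality of the optimal values of \eqref{p5} and \eqref{p6} yields tightness in the paper's sense. The only blemish is the parenthetical claim that the uniform point $x_i = k/n$ is feasible whenever $k_j \leq |\mathcal{C}_j|$, which can fail (a small group with a large quota gives $k|\mathcal{C}_j|/n < k_j$); your fallback of constructing a feasible point directly from the group quotas---which is what Algorithm~\ref{alg:init} does---correctly covers nonemptiness, so this does not affect the argument.
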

    
Corollary \ref{cor1} shows that $\lambda = w_{\max}$ is an upper bound for the minimum value of the diagonal loading to ensure the tightness. Combining the previously obtained lower bound with this upper bound, we can derive the following corollary.

\begin{cor}\label{cor2}
    $\lambda = w_{\max}$ is the minimum value of the diagonal loading parameter to guarantee the tightness from \eqref{p5} to \eqref{p6}.
\end{cor}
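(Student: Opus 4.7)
My plan is to prove Corollary \ref{cor2} by a straightforward sandwich argument, because both sides of the required bound on the minimum tightness-inducing value $\lambda^{\ast}$ of the diagonal loading parameter have already been assembled in the preceding discussion. Specifically, I would separately establish $\lambda^{\ast} \geq w_{\max}$ and $\lambda^{\ast} \leq w_{\max}$, at which point the equality $\lambda^{\ast} = w_{\max}$ is immediate.

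For the lower bound $\lambda^{\ast} \geq w_{\max}$, I would appeal to the construction summarized immediately before Theorem \ref{thm4}. Corollary 2 in \citep{lu2024on} exhibits unweighted D$k$S instances for which any $\lambda < 1$ admits a fractional optimizer of the relaxation \eqref{p4} that strictly outperforms every integral one. Rescaling all edge weights of such a graph by $w_{\max}$ produces a weighted instance where $\lambda < w_{\max}$ fails to make the relaxation from \eqref{p3} to \eqref{p4} tight. Setting $r = 1$ and $k_{1} = 0$ (or more generally $k_{i} = 0$ for all $i \in [r]$) renders the attribute constraint set $\mathcal{F}$ equal to all of $\mathbb{R}^{n}$, so the same example is a VAC-D$k$S counterexample; hence $\lambda < w_{\max}$ cannot guarantee tightness of the relaxation from \eqref{p5} to \eqref{p6}, which yields $\lambda^{\ast} \geq w_{\max}$.

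For the upper bound $\lambda^{\ast} \leq w_{\max}$, I would simply invoke Corollary \ref{cor1}: whenever $\lambda \geq w_{\max}$, Theorem \ref{thm4} produces, from any non-integral feasible point of \eqref{p6}, an integral feasible point with no smaller objective value, so \eqref{p6} admits an integral global maximizer and the relaxation is tight. In particular, the choice $\lambda = w_{\max}$ already suffices, giving $\lambda^{\ast} \leq w_{\max}$.

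The real technical work in this chain—handling fractional coordinates simultaneously with the group-wise lower-bound constraints encoded in $\mathcal{F}$ and with arbitrary edge weights, via the sophisticated rounding routine that underwrites Theorem \ref{thm4}—has already been carried out in Appendix \ref{app4}, so the only obstacle at the level of Corollary \ref{cor2} itself is notational: one must be careful, when converting the unweighted counterexample, that the scaling by $w_{\max}$ truly transfers the strict inequality between the fractional and integral optima, and that embedding D$k$S into VAC-D$k$S via $k_{i} = 0$ indeed preserves feasibility in both directions. Both checks are routine, and the combination of the two bounds completes the proof.
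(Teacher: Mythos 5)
Your sandwich argument is exactly the derivation the paper itself uses: the lower bound $\lambda^{\ast} \geq w_{\max}$ comes from scaling the unweighted counterexamples of Corollary 2 in \citep{lu2024on} by $w_{\max}$ and noting that D$k$S embeds into VAC-D$k$S by taking $k_{i} = 0$ for all $i$, while the upper bound $\lambda^{\ast} \leq w_{\max}$ is Corollary \ref{cor1}, itself a direct consequence of the rounding argument in Theorem \ref{thm4}. The proposal is correct and essentially identical in approach to the paper's.
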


\subsection{Landscape Analysis of the Relaxation}

Having characterized the role of the diagonal loading parameter in ensuring tightness, we now examine its influence on the optimization landscape of \eqref{p6}.

\begin{lem}\label{lem1}
    There does not exist a non-integral local maximizer of \eqref{p6} when $\lambda > w_{\max}$.
\end{lem}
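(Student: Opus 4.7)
The plan is to show that any non-integral feasible point $\boldsymbol{x} \in \mathcal{D}_{k}^{n} \cap \mathcal{F}$ admits a feasible ascent direction, thereby precluding non-integral local maximizers. Since $\sum_{i \in [n]} x_{i} = k$ is an integer, $\boldsymbol{x}$ must possess at least two fractional entries: a single entry in $(0,1)$ together with $0/1$-valued entries cannot sum to an integer. I will use perturbation directions of the form $\boldsymbol{d}_{ij} = \boldsymbol{e}_{i} - \boldsymbol{e}_{j}$, where $x_{i}, x_{j} \in (0, 1)$ are two such entries; since $\mathcal{G}$ is simple and undirected, $a_{ii} = a_{jj} = 0$ and $a_{ij} = a_{ji}$, so
\begin{equation*}
    g(\boldsymbol{x} + \epsilon \boldsymbol{d}_{ij}) - g(\boldsymbol{x}) = 2 \epsilon \boldsymbol{d}_{ij}^{\mathsf{T}} (\boldsymbol{A} + \lambda \boldsymbol{I}) \boldsymbol{x} + 2 \epsilon^{2} (\lambda - a_{ij}).
\end{equation*}
The hypothesis $\lambda > w_{\max} \geq a_{ij}$ makes the quadratic coefficient strictly positive. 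Consequently, once both $\pm \boldsymbol{d}_{ij}$ are shown to be feasible directions at $\boldsymbol{x}$, selecting the sign of $\epsilon$ so that the first-order term is non-negative yields strict ascent $g(\boldsymbol{x} + \epsilon \boldsymbol{d}_{ij}) > g(\boldsymbol{x})$ for all sufficiently small $|\epsilon| > 0$, contradicting local maximality.

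Feasibility splits into two cases. If some group $\mathcal{C}_{q}$ contains at least two fractional entries, I pick $i, j \in \mathcal{C}_{q}$ both fractional; then $\boldsymbol{d}_{ij}$ preserves every group sum as well as the global sum $k$, so only the box constraints matter, and these hold for sufficiently small $|\epsilon|$ since $x_{i}, x_{j} \in (0,1)$. Otherwise every group contains at most one fractional entry; I pick fractional entries $i \in \mathcal{C}_{q_{1}}$ and $j \in \mathcal{C}_{q_{2}}$ with $q_{1} \neq q_{2}$. The key observation here is that $\sum_{l \in \mathcal{C}_{q_{1}}} x_{l}$ equals $x_{i}$ plus a sum of $0/1$ entries and is therefore non-integer, whereas the bound $k_{q_{1}}$ is an integer; the constraint $\sum_{l \in \mathcal{C}_{q_{1}}} x_{l} \geq k_{q_{1}}$ thus holds strictly, and the same reasoning applies to $\mathcal{C}_{q_{2}}$. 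No attribute constraint is active along $\pm \boldsymbol{d}_{ij}$, so both directions are feasible for sufficiently small $|\epsilon|$.

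The main obstacle is the second case: the apparent danger is that shifting mass between groups could violate an active attribute constraint, but integrality of the bounds $k_{q}$ together with the fact that a lone fractional entry renders the corresponding group sum non-integer forces every such constraint to have strict slack. Once this integrality-slack observation is in place, the rest is a routine quadratic-perturbation calculation, and the strict inequality $\lambda > w_{\max}$ is precisely what lifts Theorem \ref{thm4}'s weak improvement to the strict improvement needed to rule out a local maximum.
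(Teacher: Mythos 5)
Your proposal is correct and follows essentially the same route as the paper's proof: the same swap directions $\boldsymbol{e}_i - \boldsymbol{e}_j$ between two fractional coordinates, the same case split on whether some group contains two fractional entries, the same quadratic expansion whose second-order coefficient $2(\lambda - a_{ij}) > 0$ supplies strict ascent, and the same integrality-slack observation guaranteeing feasibility of cross-group swaps. The only cosmetic difference is that the paper fixes the sign by ordering the two coordinates via $\lambda x_j + s_j \geq \lambda x_l + s_l$, whereas you choose the sign of $\epsilon$ after the fact; these are equivalent.
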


\begin{proof}
    Please refer to Appendix \ref{app2}.
\end{proof}

\begin{thm}\label{thm2}
    Given $\lambda_{2} > \lambda_{1} > w_{\max}$, if $\boldsymbol{x}$ is a local maximizer of \eqref{p6} with the diagonal loading parameter $\lambda_{1}$, then $\boldsymbol{x}$ is also a local maximizer of \eqref{p6} with the diagonal loading parameter $\lambda_{2}$.
\end{thm}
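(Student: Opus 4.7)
The plan is to exploit the fact that Lemma \ref{lem1} already guarantees every local maximizer under $\lambda_1 > w_{\max}$ is integral, and then compare the two loaded objectives by expressing their difference purely as a function of $\|\boldsymbol{x}\|_{2}^{2}$.

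First I would invoke Lemma \ref{lem1}: since $\lambda_{1} > w_{\max}$, any local maximizer $\boldsymbol{x}$ of \eqref{p6} under $\lambda_{1}$ lies in $\mathcal{B}_{k}^{n} \cap \mathcal{F}$, and hence $\|\boldsymbol{x}\|_{2}^{2} = k$. Next, for an arbitrary feasible $\boldsymbol{y} \in \mathcal{D}_{k}^{n} \cap \mathcal{F}$ I would write
\begin{equation*}
g_{\lambda_{2}}(\boldsymbol{y}) - g_{\lambda_{1}}(\boldsymbol{y}) = (\lambda_{2} - \lambda_{1}) \|\boldsymbol{y}\|_{2}^{2},
\end{equation*}
so that
\begin{equation*}
g_{\lambda_{2}}(\boldsymbol{y}) - g_{\lambda_{2}}(\boldsymbol{x}) = \bigl[g_{\lambda_{1}}(\boldsymbol{y}) - g_{\lambda_{1}}(\boldsymbol{x})\bigr] + (\lambda_{2} - \lambda_{1}) \bigl(\|\boldsymbol{y}\|_{2}^{2} - k\bigr).
\end{equation*}

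The second key step is to verify that $\|\boldsymbol{y}\|_{2}^{2} \leq k$ for every $\boldsymbol{y} \in \mathcal{D}_{k}^{n}$. I would justify this either by noting that $\|\cdot\|_{2}^{2}$ is convex on the polytope $\mathcal{D}_{k}^{n}$ (so its maximum is attained at the integral vertices, where the value equals $k$), or by the direct calculation $\|\boldsymbol{y}\|_{2}^{2} = \sum_{i} y_{i}^{2} \leq \sum_{i} y_{i} = k$ since each $y_{i} \in [0,1]$. Consequently the second summand above is non-positive.

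Finally, I would combine the two pieces: since $\boldsymbol{x}$ is a local maximizer of $g_{\lambda_{1}}$ on $\mathcal{D}_{k}^{n} \cap \mathcal{F}$, there is a neighborhood $\mathcal{N}$ of $\boldsymbol{x}$ in which $g_{\lambda_{1}}(\boldsymbol{y}) - g_{\lambda_{1}}(\boldsymbol{x}) \leq 0$ for every feasible $\boldsymbol{y} \in \mathcal{N}$. On this same neighborhood, the identity above then yields $g_{\lambda_{2}}(\boldsymbol{y}) - g_{\lambda_{2}}(\boldsymbol{x}) \leq 0$, proving $\boldsymbol{x}$ is also a local maximizer under $\lambda_{2}$. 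There is no real obstacle here beyond correctly setting up the decomposition; the content is mild because Lemma \ref{lem1} has already done the heavy lifting by ruling out non-integral local maximizers, which is precisely what forces the term $\|\boldsymbol{x}\|_{2}^{2}$ to attain its maximum value $k$ and makes the extra diagonal loading only more restrictive---not less---for the local-maximizer condition.
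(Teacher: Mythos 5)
Your proposal is correct and follows essentially the same route as the paper's proof: invoke Lemma \ref{lem1} to conclude $\boldsymbol{x}$ is integral, decompose the $\lambda_{2}$-objective difference into the $\lambda_{1}$-objective difference plus $(\lambda_{2}-\lambda_{1})(\Vert\boldsymbol{x}\Vert_{2}^{2}-\Vert\boldsymbol{y}\Vert_{2}^{2})$, and observe that both terms are non-negative on the local-optimality neighborhood since $\Vert\boldsymbol{y}\Vert_{2}^{2}\leq k$ over $\mathcal{D}_{k}^{n}\cap\mathcal{F}$. The only cosmetic difference is that you spell out the elementary bound $y_{i}^{2}\leq y_{i}$, which the paper states without calculation.
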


\begin{proof}
    Please refer to Appendix \ref{app5}.
\end{proof}

In conclusion, Corollary \ref{cor2} shows that $\lambda = w_{\max}$ is the minimum value of $\lambda$ to ensure the tightness from \eqref{p5} to \eqref{p6}, while Theorem \ref{thm2} shows that a larger $\lambda$ can make the optimization landscape more challenging. Through a more sophisticated rounding technique, Corollary \ref{cor2} and Theorem \ref{thm2} offer a significant and non-trivial extension of the results from the unweighted D$k$S problem to the more general weighted VAC-D$k$S problem, addressing both relaxation tightness and optimization landscape analysis.

\section{Algorithms for VAC-D$k$S}

Considering that VAC-D$k$S generalizes D$k$S, it is natural to attempt to generalize state-of-the-art algorithms developed for D$k$S to handle the more general VAC-D$k$S problem. In particular, L-ADMM \citep{konar2021exploring}, Extreme Point Pursuit (EXPP) \citep{liu2024cardinality}, the Frank--Wolfe algorithm \citep{lu2024on}, and the parameterization approach \citep{lu2024on} have demonstrated strong performance on D$k$S in terms of solution quality and computational efficiency.

Projection-based algorithms are commonly employed for solving D$k$S \citep{hager2016projection,liu2024cardinality}. However, the projection onto the feasible set $\mathcal{D}_{k}^{n} \cap \mathcal{F}$ lacks a closed-form solution in VAC-D$k$S. The main culprit is the introduction of $r$ attribute constraints in VAC-D$k$S which complicates the computation of the projection operator, rendering it computationally expensive and inefficient. Similarly, L-ADMM \citep{konar2021exploring} faces challenges when extended to VAC-D$k$S due to the additional $r$ variables introduced by attribute constraints, making the subproblems difficult to solve. Moreover, these constraints impede the straightforward generalization of the D$k$S parameterization approach \citep{lu2024on} to VAC-D$k$S. Consequently, we advocate for the Frank--Wolfe algorithm, a first-order, projection-free method, to efficiently tackle problem \eqref{p6}.

\subsection{The Frank--Wolfe Algorithm}

\noindent {\bf Initialization:} Since problem \eqref{p6} is non-convex, the choice of initialization can significantly affect the quality of the final solution. To this end, we use the procedure described in Algorithm \ref{alg:init}, which constructs an initial feasible solution that satisfies the attribute constraints while distributing values as uniformly as possible. We empirically observed in our experiments that this is a good choice of initialization for problem \eqref{p6}.

\noindent {\bf Algorithm:} The pseudo-code for the Frank--Wolfe algorithm is presented in Algorithm \ref{alg:fw}. Line 4 in Algorithm \ref{alg:fw} computes the gradient. Lines 5 to 9 solve the linear maximization problem
\begin{equation}
    \boldsymbol{s}^{(t)} \in \arg\max_{\boldsymbol{s} \in \mathcal{D}_{k}^{n} \cap \mathcal{F}} \boldsymbol{s}^{\mathsf{T}} \boldsymbol{g}^{(t)}. \label{p7}
\end{equation}
While projection onto the constraint set $\mathcal{D}_{k}^{n} \cap \mathcal{F}$ is challenging, problem \eqref{p7} admits a closed-form solution, making the Frank--Wolfe algorithm a natural and efficient choice for solving \eqref{p6}. Line 10 calculates the update direction, and Line 11 determines the step size. This step size rule guarantees convergence of the Frank--Wolfe algorithm to a stationary point of \eqref{p6} \citep[p.\ 268]{bertsekas2016nonlinear}, and experiments in \citep{lu2024on} demonstrate that it converges faster than the scheme proposed by \citet{lacoste2016convergence}.

\begin{algorithm} 
  \caption{The initialization for Algorithm \ref{alg:fw}} \label{alg:init}
  \KwInput{The subgraph size $k$, the sets of vertex attributes $\mathcal{C}_{1}, \mathcal{C}_{2}, \dots, \mathcal{C}_{r}$, and the parameters for the attribute constraints $k_{1}, k_{2}, \dots, k_{r}$.}
  \KwInit{$\boldsymbol{x}$ is a zero vector of length $n$.}
  \For{$i = 1, 2, \dots, r$}{
  $x[\mathcal{C}_{i}] \gets \frac{k_{i}}{|\mathcal{C}_{i}|}$\;
  }
  $residual \gets k - \sum_{i \in [r]} k_{i}$\;
  \While{$residual > 0$}{
    $\mathcal{M} \gets \{ j \in [n] \mid x_{j} < 1 \}$\;
    $share \gets \frac{residual}{|\mathcal{M}|}$\;
    \For{$j \in \mathcal{M}$}{
        $update \gets \min \{ share, 1 - x_{j} \}$\;
        $x_{j} \gets x_{j} + update$\;
        $residual \gets residual - update$\;
    }
  }
  \Return{$\boldsymbol{x}$}
\end{algorithm}

\begin{algorithm} 
  \caption{The Frank--Wolfe algorithm for \eqref{p6}} \label{alg:fw}
  \KwInput{The weighted adjacency matrix $\boldsymbol{A}$, the subgraph size $k$, the sets of vertex attributes $\mathcal{C}_{1}, \mathcal{C}_{2}, \dots, \mathcal{C}_{r}$, the parameters for the attribute constraints $k_{1}, k_{2}, \dots, k_{r}$, and the diagonal loading parameter $\lambda$.}
  \KwInit{$\boldsymbol{x}^{(1)}$ is a feasible point initialized by Algorithm \ref{alg:init}, $\boldsymbol{s}^{(1)}, \boldsymbol{s}^{(2)}, \dots$ are zero vectors of dimension $n$, and $\mathcal{H}^{(1)}, \mathcal{H}^{(2)}, \dots$ are empty sets.}
  $L \gets \Vert \boldsymbol{A} + \lambda \boldsymbol{I} \Vert_{2}$\;
  $k' \gets \sum_{i \in [r]} k_{i}$\;
  \While{the convergence criterion is not met}{
    $\boldsymbol{g}^{(t)} \gets (\boldsymbol{A} + \lambda \boldsymbol{I}) \boldsymbol{x}^{(t)}$\;
    \For{$i = 1, 2, \dots, r$}{
        $\boldsymbol{s}^{(t)} [\text{top}_{{k}_{i}}(\boldsymbol{g}^{(t)}, \mathcal{C}_{i})] \gets 1$\;
        $\mathcal{H}^{(t)} \gets \mathcal{H}^{(t)} \cup \text{top}_{{k}_{i}}(\boldsymbol{g}^{(t)}, \mathcal{C}_{i})$
    }
    \If{$k > k'$}{
        $\boldsymbol{s}^{(t)} [\text{top}_{k - k'}(\boldsymbol{g}^{(t)}, [n] \backslash \mathcal{H}^{(t)})] \gets 1$\;
    }
    $\boldsymbol{d}^{(t)} \gets \boldsymbol{s}^{(t)} - \boldsymbol{x}^{(t)}$\;
    $\gamma^{(t)} \gets \min \left\{1, \frac{(\boldsymbol{g}^{(t)})^{\mathsf{T}} \boldsymbol{d}^{(t)}}{L \Vert \boldsymbol{d}^{(t)} \Vert_{2}^{2}} \right\}$\;
    $\boldsymbol{x}^{(t + 1)} \gets \boldsymbol{x}^{(t)} + \gamma^{(t)} \boldsymbol{d}^{(t)}$\;
    $t \gets t + 1$\;
  }
\end{algorithm}

\noindent {\bf Complexity Analysis:} To highlight the efficiency of our approach, we analyze the time complexity of the initialization step (Algorithm \ref{alg:init}) and the per-iteration cost of the Frank--Wolfe algorithm (Algorithm \ref{alg:fw}).

For the initialization step, the for-loop in Lines 1 and 2 takes $O (n)$ time. Line 3 takes $O(r)$ time. For the while-loop in Lines 4 to 10, each iteration takes $O(n)$ time. Note that the entries corresponding to each group of vertices remain equal after each iteration, and the residual is greater than zero only if at least one entry reaches 1 in that iteration. Hence, if the residual is still positive after an iteration, at least one group's entries become 1, implying that the total number of iterations is at most $r$. Therefore, the total time complexity of Algorithm \ref{alg:init} is $O(rn)$. Note that in practice, the number of groups $r$ is usually much smaller than $n$, so the initialization step is typically efficient.

For the Frank--Wolfe algorithm, if the Lipschitz constant is calculated by the Power method and treat the number of iterations for the Power method as a constant, then it takes $O(m + n)$ time to calculate the constant because there are $O(m + n)$ non-zeros elements in $\boldsymbol{A} + \lambda \boldsymbol{I}$. Similarly, Line 4 takes $O(m + n)$ time to calculate the gradient because $\boldsymbol{A} + \lambda \boldsymbol{I}$ has $O(m + n)$ non-zeros elements. For Lines 5 to 7, each inner iteration can be implemented by first building a max-heap in $O(|\mathcal{C}_{i}|)$ time using Floyd's algorithm, and then extracting the top-$k_{i}$ elements in $O(k_{i} \log |\mathcal{C}_{i}|)$ time. Summing over all $i \in [r]$, the total worst-case time complexity is $O(n + k \log n)$. Alternatively, if quickselect is used to find the top-$k_{i}$ elements in each group, the average time complexity per group reduces to $O(|\mathcal{C}_{i}|)$, resulting in an overall average-case complexity of $O(n)$. Similarly, for Lines 8 and 9, using a max-heap requires $O(n + k \log n)$ time in the worst case. Alternatively, using quickselect leads to an average-case complexity of $O(n)$. For Lines 10 to 13, it takes $O(n)$ time. Therefore, the per-iteration time complexity of Algorithm \ref{alg:fw} is $O(m + n + k \log n)$ when using a heap-based implementation. Alternatively, an average-case complexity of $O(m + n)$ is achievable via quickselect.

The per-iteration complexity of Algorithm \ref{alg:fw} matches that of its counterpart for the classical D$k$S problem, and is independent of the number of attribute constraints $r$, highlighting the efficiency and scalability of the Frank--Wolfe approach for solving the more general VAC-D$k$S problem.

\subsection{Baseline Algorithms and Upper Bound for VAC-D$k$S} \label{sec:baselines}

VAC-D$k$S is a new problem introduced in this paper, and there is no existing baseline in the literature to evaluate the effectiveness of the proposed Frank--Wolfe algorithm for solving \eqref{p6}. To address this, we draw inspiration from the Greedy Peeling algorithm \citep{asahiro2000greedily, charikar2000greedy} and the low-rank bilinear optimization (LRBO) algorithm \citep{papailiopoulos2014finding}, and generalize them to the VAC-D$k$S setting. Additionally, we derive an upper bound on the optimal edge weight to further assess solution quality.

\subsubsection{The Greedy Peeling Algorithm}

We first adapt the classical Greedy Peeling algorithm \citep{asahiro2000greedily, charikar2000greedy} as a baseline for VAC-D$k$S. The original algorithm iteratively removes the vertex with the minimum (weighted) degree, breaking ties arbitrarily, until $k$ vertices remain. To satisfy the attribute constraints in VAC-D$k$S, we modify the peeling criterion to ensure that the number of vertices from each attribute group remains above its respective threshold during the peeling process.

The time complexity of the Greedy Peeling algorithm depends on the data structure used to maintain node degrees. For unweighted graphs, a bucket queue can be used to achieve $O(m + n)$ time complexity. For weighted graphs, bucket-based methods no longer apply. Using Fibonacci heaps yields a total amortized time complexity of $O(m + n \log n)$.

\subsubsection{The Low-Rank Bilinear Optimization (LRBO) Algorithm}

The second algorithm is based on the LRBO approach proposed by \citet{papailiopoulos2014finding}. The LRBO approach with rank-$d$ approximation for D$k$S has a time complexity of $O(n^{d + 1})$, making only the rank-1 approximation practically tractable for moderate-size problems. Therefore, we focus solely on the rank-1 case in this paper.

Let $\lambda_{1}$ and $\boldsymbol{v}_{1}$ be the largest eigenvalue (in magnitude) of $\boldsymbol{A}$ and the corresponding eigenvector of the largest eigenvalue, respectively. Let $\boldsymbol{A}_{1} = \boldsymbol{v}_{1} \boldsymbol{u}_{1}^{\mathsf{T}}$, where $\boldsymbol{u}_{1} = \lambda_{1} \boldsymbol{v}_{1}$. The rank-1 case solves the following problem:
\begin{equation}
    \max_{\boldsymbol{x}, \boldsymbol{y} \in \mathcal{B}_{k}^{n} \cap \mathcal{F}} \boldsymbol{x}^{\mathsf{T}} \boldsymbol{v}_{1} \boldsymbol{u}_{1}^{\mathsf{T}} \boldsymbol{y} = \max_{\boldsymbol{y} \in \mathcal{B}_{k}^{n} \cap \mathcal{F}} \left[ \max_{\boldsymbol{x} \in \mathcal{B}_{k}^{n} \cap \mathcal{F}} \boldsymbol{x}^{\mathsf{T}} \boldsymbol{v}_{y} \right], \label{p8}
\end{equation}
where $\boldsymbol{v}_{y} = c_{1} \boldsymbol{v}_{1}$ and $c_{1} = \boldsymbol{u}_{1}^{\mathsf{T}} \boldsymbol{y}$.

For the subproblem $\max_{\boldsymbol{x} \in \mathcal{B}_{k}^{n} \cap \mathcal{F}} \boldsymbol{x}^{\mathsf{T}} \boldsymbol{v}_{y}$, we only need to consider the following two linear maximization problems $\max_{\boldsymbol{x} \in \mathcal{B}_{k}^{n} \cap \mathcal{F}} \boldsymbol{x}^{\mathsf{T}} \boldsymbol{v}_{1}$ and $\max_{\boldsymbol{x} \in \mathcal{B}_{k}^{n} \cap \mathcal{F}} -\boldsymbol{x}^{\mathsf{T}} \boldsymbol{v}_{1}$. After solving these two problems, $\boldsymbol{y}$ can be obtained by solving two other corresponding linear maximization problems.

LRBO requires computing the largest eigenvalue and corresponding eigenvector of $\boldsymbol{A}$, which takes $O(m + n)$ time. Besides that, LRBO also requires solving linear maximization subproblems similar to those in the Frank--Wolfe algorithm. As analyzed previously, the total time complexity of LRBO is $O(m+ n + k \log n)$ when using a binary heap in the worst case, or $O(m + n)$ on average when using a quickselect-based approach.

\subsubsection{An Upper Bound on the Edge Weight}

To better interpret the quality of a solution, we define the normalized edge weight of a solution as
\begin{equation}
    \text{Normalized Edge Weight} = \frac{\text{Total Edge Weight}}{w_{\max} \binom{k}{2} }.
\end{equation}

We now generalize the upper bound on normalized edge density for the unweighted D$k$S problem proposed by \citet{papailiopoulos2014finding} to an upper bound on the normalized edge weight for the weighted VAC-D$k$S problem in the following theorem.

\begin{thm} \label{thm3}
The optimal normalized edge weight (or the normalized edge density in the case of unweighted graphs) of VAC-D$k$S can be bounded by
\begin{equation}
    \min \left \{ 1, \frac{\boldsymbol{x}^{\ast \mathsf{T}} \boldsymbol{A}_{1} \boldsymbol{y}^{\ast}}{w_{\max} k (k - 1)} + \frac{\sigma_{2} (\boldsymbol{A})}{w_{\max} (k - 1)}, \frac{\sigma_{1} (\boldsymbol{A})}{w_{\max} (k - 1)} \right \}, \label{eq5}
\end{equation}
where $(\boldsymbol{x}^{\ast}, \boldsymbol{y}^{\ast})$ are an optimal solution to \eqref{p8} and $\sigma_{i} (\boldsymbol{A})$ denotes the $i$-th largest singular value of $\boldsymbol{A}$.
\end{thm}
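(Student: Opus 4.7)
The plan is to prove each of the three terms in \eqref{eq5} as a separate upper bound on the normalized edge weight; the minimum of these then yields the stated bound. The first term is immediate from the definition: a subgraph on $k$ vertices has at most $\binom{k}{2}$ edges, each of weight at most $w_{\max}$, so the normalized ratio is always at most $1$ regardless of any attribute constraint.

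For the third term, I would let $\boldsymbol{x}_{\text{opt}}$ be the indicator of any optimal VAC-D$k$S subgraph. Since $\boldsymbol{A}$ is symmetric with zero diagonal, the total edge weight equals $\boldsymbol{x}_{\text{opt}}^{\mathsf{T}} \boldsymbol{A} \boldsymbol{x}_{\text{opt}}/2$, and because $\boldsymbol{x}_{\text{opt}} \in \mathcal{B}_{k}^{n} \cap \mathcal{F}$ satisfies $\|\boldsymbol{x}_{\text{opt}}\|_{2}^{2} = k$, the Rayleigh quotient inequality for symmetric matrices yields $\boldsymbol{x}_{\text{opt}}^{\mathsf{T}} \boldsymbol{A} \boldsymbol{x}_{\text{opt}} \leq \sigma_{1}(\boldsymbol{A}) \, k$. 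Dividing by $2 w_{\max} \binom{k}{2} = w_{\max} k (k - 1)$ recovers $\sigma_{1}(\boldsymbol{A})/(w_{\max}(k - 1))$.

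For the middle term, I would decompose $\boldsymbol{A} = \boldsymbol{A}_{1} + (\boldsymbol{A} - \boldsymbol{A}_{1})$, where $\boldsymbol{A}_{1}$ is the rank-$1$ truncation defined preceding \eqref{p8}. Since $\boldsymbol{A} - \boldsymbol{A}_{1}$ is symmetric with spectral norm $\sigma_{2}(\boldsymbol{A})$, the residual quadratic is bounded by $\boldsymbol{x}_{\text{opt}}^{\mathsf{T}} (\boldsymbol{A} - \boldsymbol{A}_{1}) \boldsymbol{x}_{\text{opt}} \leq \sigma_{2}(\boldsymbol{A}) \, k$. For the rank-$1$ piece, the key observation is that $(\boldsymbol{x}_{\text{opt}}, \boldsymbol{x}_{\text{opt}})$ is a feasible pair for the bilinear problem in \eqref{p8}, so by definition of the optimal pair $(\boldsymbol{x}^{\ast}, \boldsymbol{y}^{\ast})$ we have $\boldsymbol{x}_{\text{opt}}^{\mathsf{T}} \boldsymbol{A}_{1} \boldsymbol{x}_{\text{opt}} \leq \boldsymbol{x}^{\ast \mathsf{T}} \boldsymbol{A}_{1} \boldsymbol{y}^{\ast}$. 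Adding the two contributions, halving to convert from the quadratic form to total edge weight, and normalizing by $w_{\max} \binom{k}{2}$ then produces exactly the middle term of \eqref{eq5}.

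The main subtlety---more conceptual than technical---lies in the rank-$1$ step: relaxing the symmetric quadratic $\boldsymbol{x}^{\mathsf{T}} \boldsymbol{A}_{1} \boldsymbol{x}$ to a bilinear form over independent $\boldsymbol{x}, \boldsymbol{y}$ is what makes the bound computable in the first place, since \eqref{p8} is exactly solvable via the LRBO procedure even though VAC-D$k$S itself is NP-hard. The remaining work is careful bookkeeping of the factor of $2$ between $\boldsymbol{x}^{\mathsf{T}} \boldsymbol{A} \boldsymbol{x}$ and total edge weight, and of the denominator $w_{\max} \binom{k}{2}$ used in the normalization.
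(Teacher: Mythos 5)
Your proposal is correct and follows essentially the same route as the paper's proof: the trivial $\binom{k}{2}w_{\max}$ bound, the spectral-norm bound $\boldsymbol{x}^{\mathsf{T}}\boldsymbol{A}\boldsymbol{x}\leq\sigma_{1}(\boldsymbol{A})k$, and the decomposition $\boldsymbol{A}=\boldsymbol{A}_{1}+(\boldsymbol{A}-\boldsymbol{A}_{1})$ with the rank-one part relaxed to the bilinear optimum of \eqref{p8} and the residual bounded by $\sigma_{2}(\boldsymbol{A})k$. The only (harmless) difference is that you apply the decomposition directly to the quadratic form at the VAC-D$k$S optimum, whereas the paper first passes through the intermediate full-matrix bilinear problem $\max_{\boldsymbol{x},\boldsymbol{y}\in\mathcal{B}_{k}^{n}\cap\mathcal{F}}\boldsymbol{x}^{\mathsf{T}}\boldsymbol{A}\boldsymbol{y}$; your shortcut is valid and slightly more direct.
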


\begin{proof}
    Please refer to Appendix \ref{app3}.
\end{proof}

\section{Experimental Results}

\subsection{Datasets}

We evaluate our method on both real-world attributed graphs and synthetic graphs. The real-world benchmarks include the following commonly used datasets:

\begin{itemize}
    \item \textbf{Political Books (Books)}: In this network, vertices represent books on United States politics, and edges represent co-purchasing relationships. Each vertex has an attribute indicating its political leaning. The network was downloaded from \url{https://github.com/SotirisTsioutsiouliklis/FairLaR} and the original network is also available at \url{https://websites.umich.edu/~mejn/netdata/}.
    \item \textbf{Political Blogs (Blogs)} \citep{adamic2005political}: In this network, vertices represent blogs on United States politics, and edges represent hyperlinks between them. Each vertex has an attribute indicating its political leaning. The network was downloaded from \url{https://github.com/SotirisTsioutsiouliklis/FairLaR}.
    \item \textbf{Wikipedia Crocodile (Wikipedia)} \citep{rozemberczki2021multi}: In this network, vertices represent Wikipedia pages related to crocodiles, and edges represent mutual links between them. Each vertex has an attribute indicating whether it is popular.  The network was downloaded from \url{https://github.com/benedekrozemberczki/FEATHER}.
    \item \textbf{Political Retweet (Twitter)} \citep{rossi2015network}: In this network, vertices represent Twitter users, and edges represent retweet relationships between them. Each vertex has an attribute indicating the user's political leaning. The network was downloaded from \url{https://github.com/SotirisTsioutsiouliklis/FairLaR}.
    \item \textbf{GitHub Developer (GitHub)} \citep{rozemberczki2021multi}: In this network, vertices represent GitHub developers, and edges represent mutual follow relationships between them. Each vertex has an attribute indicating the developer’s specialization in either machine learning or web development. The network was downloaded from \url{https://snap.stanford.edu/data/github-social.html}.
    \item \textbf{LastFM Asia (LastFM)} \citep{rozemberczki2020characteristic}: In this network, vertices represent LastFM users in Asian countries, and edges represent mutual follow relationships between them. Each vertex has an attribute indicating the user's country. The network was downloaded from \url{https://github.com/benedekrozemberczki/FEATHER}.
\end{itemize}

Table \ref{tab:dataset} summarizes key statistics of the real-world datasets, including the number of vertices, edges, and attribute groups.

\begin{table}
  \centering
  \caption{Statistics of real-world datasets ($n$ is the number of vertices, $m$ is the number of edges, and $r$ is the number of groups).}
  \label{tab:dataset}
  \begin{tabular}{cccc}
    \toprule
    Name & $n$ & $m$ & $r$\\
    \midrule
    Books & 92 & 374 & 2 \\
    Blogs & 1,222 & 16,714 & 2 \\
    Wikipedia & 11,631 & 170,773 & 2 \\
    Twitter & 18,470 & 48,053 & 2 \\
    GitHub & 37,700 & 289,003 & 2 \\
    LastFM & 7,624 & 27,806 & 18 \\
    \bottomrule
  \end{tabular}
\end{table}

While the real-world datasets commonly used in prior work serve as useful benchmarks, they have certain limitations. In particular, most are relatively small in scale, contain only unweighted edges, and involve binary group attributes. To enable evaluation on larger datasets, some approaches assign random attribute values to existing real-world graphs. However, this practice may weaken the natural correlation between attributes and graph structure, potentially reducing the effectiveness of the evaluation.

To address these limitations, we design a series of synthetic graphs based on the planted clique model. Specifically, we generate an Erd\H{o}s--R\'enyi random graph $G(n, p)$ with $n$ vertices, where each edge is included independently with probability $p$. Each vertex is randomly assigned to one of $r$ groups with equal probability.  We then plant a clique of size $k$ by selecting exactly $k/r$ vertices from each group, where $k$ is chosen to be divisible by $r$ to ensure equal allocation. This planted clique serves as the ground-truth dense community for evaluating algorithm performance under multi-group settings.

For unweighted graphs, edges are either present or absent according to this process. For weighted graphs, we assign edge weights differently: each edge in the initial Erd\H{o}s--R\'enyi graph is given a weight sampled uniformly from the interval $[0.8, 1]$, and edges within the planted clique are set to 1.

This setup enables systematic evaluation of algorithm performance on weighted or unweighted graphs of various scales, with multiple groups and controlled attribute and structural properties.

\subsection{Baselines and Implementation Details}

We evaluate our method against the two baselines that we derived by generalizing their D$k$S version in Section \ref{sec:baselines}: the (generalized) Greedy Peeling algorithm and the LRBO approach. In addition, we include a hybrid variant that initializes Algorithm \ref{alg:fw} with the output of Greedy Peeling. We include this variant to test whether initializing with a high-quality heuristic output allows our method to achieve better results than starting from scratch.

All experiments were conducted on a workstation with an AMD Ryzen Threadripper 3970X CPU, 256 GB RAM, running Ubuntu 20.04. The implementation was done in Python 3.11.

For Frank--Wolfe, based on the tightness analysis in Corollary \ref{cor2} and the landscape analysis in Theorem \ref{thm2}, we set the diagonal loading parameter to $w_{\max}$ (or 1 for unweighted graphs). The maximum iteration count is set to 500, which suffices for convergence in most real-world cases.

For Greedy Peeling, we use different implementations depending on the graph type. For unweighted graphs, we adopt a bucket queue for efficiency. For weighted graphs, we use a Fibonacci heap to maintain the peeling order. The Fibonacci heap is implemented via the \texttt{fibonacci-heap-mod} Python package.

For synthetic datasets, to ensure reproducibility, we fix the random seed for each of the $t$ repeated trials to values from 0 to $t - 1$. This guarantees that all experiments are deterministic and results can be consistently reproduced.

We measure execution time by running each algorithm in a separate process to ensure memory isolation. Within each process, we perform a warm-up run using the same configuration and input graph to eliminate one-time initialization effects. The warm-up run is not timed; it is followed by a separate execution for measurement.

\subsection{Binary-Attribute Graphs with Attribute-Constrained Groups}

We consider multiple binary-attribute real-world graphs in Table \ref{tab:dataset}, where each graph contains exactly one attribute-constrained group. We designate group 1 as the attribute-constrained group following the attribute convention used in \url{https://github.com/SotirisTsioutsiouliklis/FairLaR}. For each graph, at least $\lceil k \cdot \alpha \rceil$ vertices are selected from the attribute-constrained group, where $\alpha$ denotes the attribute-constrained group ratio.

Figure \ref{fig:twitter} and Figures \ref{fig:books} to \ref{fig:wiki} in Appendix \ref{app6} demonstrate that Frank--Wolfe with uniform initialization typically produces subgraphs with the highest density in most cases. Using the Greedy Peeling result as initialization for Frank--Wolfe yields density values higher than Greedy Peeling alone, and in some cases even achieves the highest subgraph density overall.

Notably, Frank--Wolfe with uniform initialization exhibits a distinctive ability to discover subgraphs with imbalanced attribute composition but exceptionally high edge density. This property is particularly valuable in community discovery, as it helps uncover hidden, tightly connected attribute-constrained groups.

\begin{figure}[htb]
  \centering
  \begin{subfigure}[b]{0.45\textwidth}
    \includegraphics[width=\textwidth]{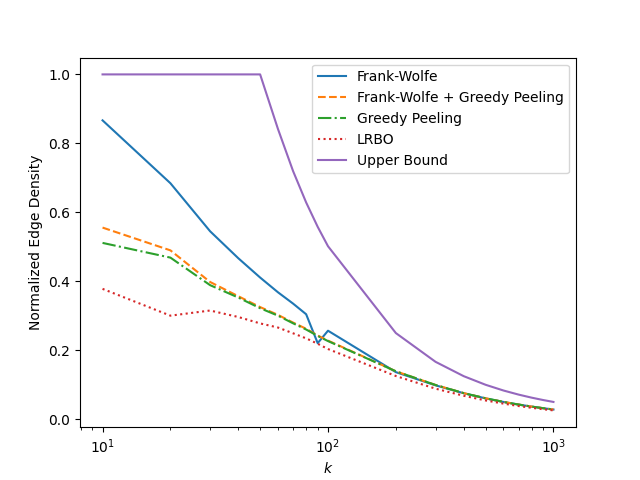}
    \caption{$\alpha = 0.25$}
  \end{subfigure}
  \hfill
  \begin{subfigure}[b]{0.45\textwidth}
    \includegraphics[width=\textwidth]{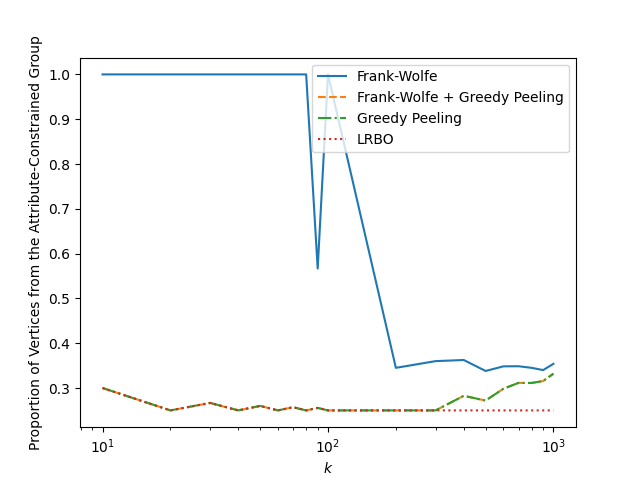}
    \caption{$\alpha = 0.25$}
  \end{subfigure}
  \begin{subfigure}[b]{0.45\textwidth}
    \includegraphics[width=\textwidth]{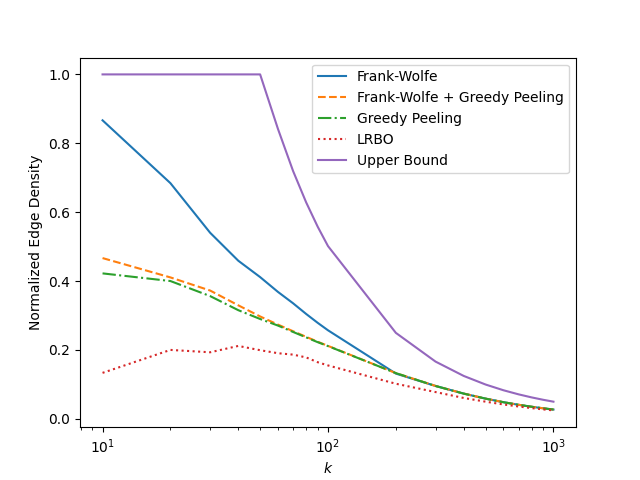}
    \caption{$\alpha = 0.5$}
  \end{subfigure}
  \hfill
  \begin{subfigure}[b]{0.45\textwidth}
    \includegraphics[width=\textwidth]{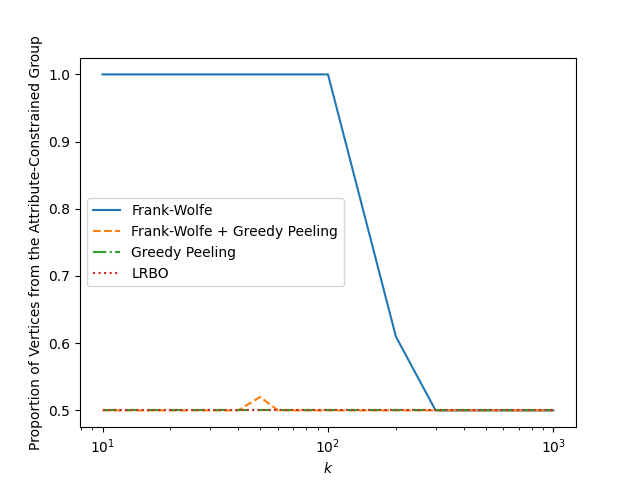}
    \caption{$\alpha = 0.5$}
  \end{subfigure}
  \caption{Normalized edge density and attribute-constrained group proportion on the Twitter dataset under different $\alpha$.}
  \label{fig:twitter}
\end{figure}

\subsection{Multi-Attribute-Value Graphs with Group Representation Constraints}

We first evaluate our methods on the LastFM real-world dataset, which contains 18 attribute groups. We impose group representation constraints by requiring at least 5 or 10 vertices from each group in the extracted subgraphs. Figure \ref{fig:lastfm} shows that all algorithms perform similarly, with the exception of LRBO, which exhibits worse results.

\begin{figure}[htb]
  \centering
  \begin{subfigure}[b]{0.45\textwidth}
    \includegraphics[width=\textwidth]{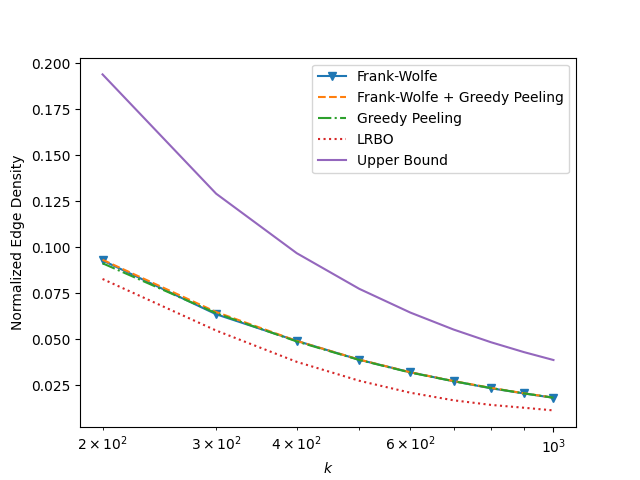}
    \caption{$k_i = 5, \forall i \in [r]$}
  \end{subfigure}
  \hfill
  \begin{subfigure}[b]{0.45\textwidth}
    \includegraphics[width=\textwidth]{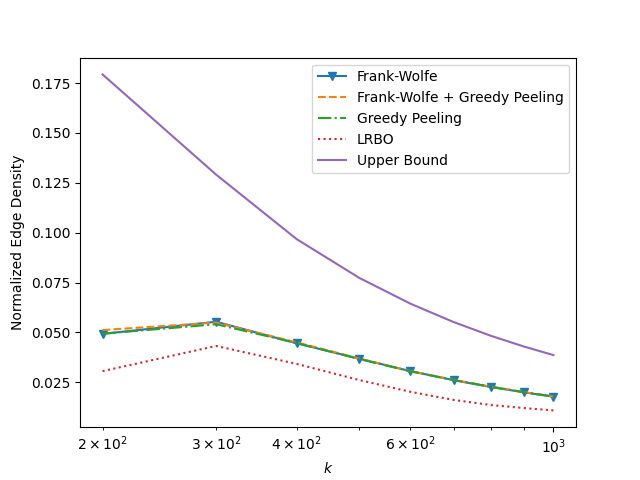}
    \caption{$k_i = 10, \forall i \in [r]$}
  \end{subfigure}
  \caption{Normalized edge density on the LastFM dataset with 18 groups under different $k_i$.}
  \label{fig:lastfm}
\end{figure}

To further challenge these algorithms, we generate unweighted synthetic planted clique graphs with 3 attribute groups and significant background noise. All graphs share the same parameters---number of nodes $n = 10,000$, edge probability $p = 0.05$, and planted clique size  $k = 30$---while varying only in random seeds to ensure reproducibility. We impose group representation constraints by requiring at least 5 vertices from each group in the extracted subgraphs.

Table \ref{tab:syn1} presents the performance of different algorithms on synthetic planted clique graphs with significant background noise and three attribute groups. Both Greedy Peeling and LRBO failed to recover the planted subgraph in any of the 20 runs. While Greedy Peeling achieved relatively high and stable normalized edge density ($0.823 \pm 0.109$), it consistently selected dense regions formed by noisy connections, suggesting that the background noise effectively masked the true clique. In this sense, Greedy Peeling often found near-optimal solutions in terms of density but lacked the resolution to distinguish the planted structure from spurious dense subgraphs. Frank--Wolfe with uniform initialization recovered the planted subgraph in 13 out of 20 runs, reflecting its capacity to escape poor local optima, albeit with high variance ($0.741 \pm 0.362$). LRBO performed only marginally better than random, with very low density and no successful recoveries, demonstrating the poor performance of the spectral-based approach in noisy settings. When initialized with Greedy Peeling, Frank--Wolfe succeeded in all runs and achieved perfect density, further highlighting that a good heuristic starting point, though insufficient on its own, can be effectively refined by optimization. This result underscores the effectiveness of our proposed problem reformulation, which enables Frank--Wolfe to meaningfully navigate the solution space and recover meaningful structures even under significant noise.

\begin{table}[t]
\centering
\caption{Normalized edge density and success count for different algorithms on unweighted synthetic planted clique graphs ($n= 10,000$, $p = 0.05$, and $k = 30$) with 3 attribute groups. Normalized edge density values are reported as mean $\pm$ sample standard deviation over 20 runs.}
\label{tab:syn1}
\begin{tabular}{lcc}
\toprule
\textbf{Algorithm} & \textbf{Normalized Edge Density} & \textbf{Success Count} \\
\midrule
LRBO                      & $0.074 \pm 0.011$ & 0 / 20 \\
Greedy Peeling & $0.823 \pm 0.109$  & 0 / 20 \\
Frank--Wolfe          & $0.741 \pm 0.362$ & 13 / 20 \\
Frank--Wolfe + Greedy Peeling         & $1.000 \pm 0.000$ & 20 / 20 \\
\bottomrule
\end{tabular}
\end{table}

\subsection{Scalability on Large Unweighted and Weighted Graphs}

To evaluate scalability in large-scale settings, we generate unweighted and weighted synthetic planted clique graphs with 3 attribute groups. All graphs share the same parameters---number of nodes $n = 200,000$, edge probability $p = 0.0025$, and planted clique size  $k = 60$. We impose group representation constraints by requiring at least 10 vertices from each group in the extracted subgraphs.

Tables \ref{tab:syn2} and \ref{tab:syn3} show the results on large-scale unweighted and weighted planted clique graphs, respectively. While both Frank--Wolfe and Greedy Peeling successfully identify the planted structure in all runs, LRBO fails consistently across both settings. In terms of execution time, Frank--Wolfe outperforms Greedy Peeling, especially in the weighted setting where Greedy Peeling requires a Fibonacci heap to maintain correct peeling order, resulting in over $5\times$ longer runtimes. This speedup may be attributed to Frank--Wolfe’s better cache locality and its algorithmic structure, which is more amenable to vectorization and parallelization. Our Frank--Wolfe implementation relies on \texttt{scipy}’s single-threaded sparse matrix-vector multiplication; employing parallelized libraries could yield further performance improvements, particularly for larger graphs.

\begin{table}[t]
\centering
\caption{Normalized edge density, success count, and execution time for different algorithms on unweighted synthetic planted clique graphs ($n= 200,000$, $p = 0.0025$, and $k = 60$) with 3 attribute groups. Normalized edge density values and execution times are reported as mean $\pm$ sample standard deviation over 5 runs.}
\label{tab:syn2}
\begin{tabular}{lccc}
\toprule
\textbf{Algorithm} & \textbf{Normalized Edge Density} & \textbf{Success Count} & \textbf{Execution Time (s)} \\
\midrule
LRBO                      & $0.094 \pm 0.036$ & 0 / 5 & $171.3 \pm 1.3$  \\
Greedy Peeling & $1.000 \pm 0.000$  & 5 / 5 & $226.1 \pm 0.5$\\
Frank--Wolfe          & $1.000 \pm 0.000$ & 5 / 5 & $185.6 \pm 0.8$  \\
\bottomrule
\end{tabular}
\end{table}

\begin{table}[t]
\centering
\caption{Normalized edge weight, success count, and execution time for different algorithms on weighted synthetic planted clique graphs ($n= 200,000$, $p = 0.0025$, and $k = 60$) with 3 attribute groups. Normalized edge weight values and execution times are reported as mean $\pm$ sample standard deviation over 5 runs.}
\label{tab:syn3}
\begin{tabular}{lccc}
\toprule
\textbf{Algorithm} & \textbf{Normalized Edge Weight} & \textbf{Success Count} & \textbf{Execution Time (s)} \\
\midrule
LRBO                      & $0.162 \pm 0.032$ & 0 / 5 & $173.6 \pm 3.9$  \\
Greedy Peeling & $1.000 \pm 0.000$  & 5 / 5 & $973.6 \pm 7.5$\\
Frank--Wolfe          & $1.000 \pm 0.000$ & 5 / 5 & $185.6 \pm 3.3$  \\
\bottomrule
\end{tabular}
\end{table}

\subsection{Case Study: Greek Politics}

We next conduct a case study on a dataset related to Greek politics \citep{stamatelatos2020revealing}, which was previously used by \citet{fazzone2022discovering} to analyze political divisions. The raw data is a weighted undirected graph consisting of 186 vertices and 17,185 edges, where the vertices represent Twitter accounts of Greek MPs (Members of Parliament) and Greek media outlets, and the edge weights indicate the audience similarity between two Twitter accounts. The network was downloaded from \url{https://github.com/tlancian/dith}. We manually labeled each vertex with a political orientation, where 0 denotes left-wing leaning and 1 denotes right-wing leaning. The final distribution consists of 95 vertices labeled 0 and 91 vertices labeled 1.

We set $k = 20$ and compared two cases: $k_{1} = k_{2} = 0$, which corresponds to the D$k$S problem, and $k_{1} = k_{2} = 10$, which corresponds to the perfectly balanced VAC-D$k$S problem. We used Algorithm \ref{alg:fw} to solve these two problems. 

Table \ref{tab:greek} presents the subgraphs identified by the classical D$k$S algorithm and the proposed perfectly balanced VAC-D$k$S variant on the Greek Politics dataset. Interestingly, despite the added attribute constraints, the perfectly balanced VAC-D$k$S achieved a slightly higher normalized edge weight (0.391 vs. 0.376). This improvement can be attributed to the enhanced initialization provided by the attribute constraints, which help guide the algorithm away from suboptimal local solutions. In contrast to the perfectly balanced VAC-D$k$S, the classical D$k$S result is notably imbalanced: 80\% of the selected vertices belong to the right-wing group, indicating a skewed extraction. It is also worth noting that both algorithms exclusively selected politicians and no media accounts. This is likely due to the broader and more mixed audience base of media outlets, which implies sparser audience connections with other accounts and thus lower pairwise similarity scores.

\begin{table}[t]
  \centering
  \caption{Greek MPs and subgraph normalized edge weight extracted from the Greek Politics dataset by Algorithm \ref{alg:fw}. Labels indicate political leanings ({\bf (0)}: left leaning, {\bf (1)}: right leaning).}
  \label{tab:greek}
  \begin{tabular}{cc}
    \toprule
    D$k$S & Perfectly Balanced VAC-D$k$S \\
    \midrule
    Fotini Arampatzi (1) & Vassilis Kikilias (1)  \\
    Evi Christofilopoulou (0) & Spyros Lykoudis (0) \\
    Simos Kedikoglou (1) & Evi Christofilopoulou (0) \\
    Odysseas Konstantinopoulos (0) & Odysseas Konstantinopoulos (0) \\
    Kostas Skandalidis (0) & Kostas Skandalidis (0) \\
    Gerasimos Giakoumatos (1) & Nikos Dendias (1) \\
    Niki Kerameus (1) & Andreas Loverdos (0) \\
    Giannis Kefalogiannis (1) & Varvitsiotis Miltiadis (1) \\
    Varvitsiotis Miltiadis (1) & Notis Mitarachi (1) \\
    Notis Mitarachi (1) & Makis Voridis (1) \\
    Kostas Skrekas (1) & Giorgos Koumoutsakos (1) \\
    Giorgos Koumoutsakos (1) & Christos Staikouras (1) \\
    Christos Staikouras (1) & Olga Kefalogianni (1) \\
    Giannis Plakiotakis (1) & George Katrougalos (0) \\
    Theodoros Karaoglou (1) & Anna Asimakopoulou (1) \\
    Anna Karamanli (1) & Markos Bolaris (0) \\
    Stavros Kalafatis (1) & Elena Kountoura (0) \\
    Anna Asimakopoulou (1) & Fofi Gennimata (0) \\
    Nikitas Kaklamanis (1) & Nikitas Kaklamanis (1) \\
    Yannis Maniatis (0) & Yannis Maniatis (0) \\
    \midrule
    Normalized Edge Weight: 0.376 & Normalized Edge Weight: 0.391 \\
    \bottomrule
  \end{tabular}
\end{table}

The classical D$k$S formulation tends to select center-right politicians with relatively cohesive and moderate ideological positions. The few left-wing politicians drawn in the mix are center-left, known for their relatively moderate demeanor.  

The perfectly balanced VAC-D$k$S, on the other hand, pulls in a more politically heterogeneous mix that includes several individuals advocating less moderate views which are often prominently featured in public media and political discourse. Makis Voridis is a prominent representative of very right-wing views \citep{smith2017rise}\footnote{See also  \url{https://en.wikipedia.org/wiki/Makis_Voridis}.}; Andreas Loverdos is known for his strong public stances on various policy issues; and Nikos Dendias is likewise known for his firm stance on defense and national priorities, including security. Interestingly, the VAC-D$k$S solution also includes several prominent figures that have toned down and moderated the political discourse---such as socialist leader Fofi Genimata, and George Katrougalos who helped forge a treaty that was politically sensitive and contentious. Overall, the VAC-D$k$S solution is much more interesting than the D$k$S one. These results suggest that the vertex-attribute-constrained formulation is not only more balanced in representation but also more effective at highlighting structurally dense, cross-cutting subgraphs that reflect real-world political salience.

\section{Conclusion}

In this paper, we introduced the Vertex-Attribute-Constrained Densest $k$-Subgraph (VAC-D$k$S) problem, a generalization of the classical D$k$S that incorporates vertex-attribute constraints. We showed that VAC-D$k$S is NP-hard, as it subsumes D$k$S as a special case. To address this challenge, we proposed an equivalent reformulation of VAC-D$k$S using diagonal loading, followed by a relaxation of the combinatorial constraint to its convex hull. Crucially, we proved that the relaxation is tight in general if and only if the diagonal loading parameter $\lambda \geq w_{\max}$, and provided landscape analysis to illustrate how $\lambda$ affects solution quality. We then designed a projection-free Frank--Wolfe algorithm to solve the relaxed problem efficiently. Extensive experiments demonstrate that our method achieves high-quality solutions across various settings and scales well to large graphs. Additionally, we illustrate an application of our method to a real-world political network in Greece. Our algorithm identifies a subgraph with balanced representation from both political camps, while still capturing individuals with strong ideological identities---an effect not observed with the classical D$k$S formulation. This case study highlights the practical relevance of attribute constraints in uncovering more representative and interpretable structures in real networks.

\bibliography{tmlr}
\bibliographystyle{tmlr}

\appendix

\section{Proof of Theorem \ref{thm1}} \label{app1}
\begin{proof}
    The Krein-Milman theorem \citep[Theorem 3.23]{rudin1991functional} states that a non-empty, compact convex set is the closed convex hull of the set of its extreme points. Similar to \citep{liu2024cardinality}, we need to prove that a point is an extreme point of $\mathcal{D}_{k}^{n} \cap \mathcal{F}$ if and only if it is a point in $\mathcal{B}_{k}^{n} \cap \mathcal{F}$.

    We first prove that a point is an extreme point of $\mathcal{D}_{k}^{n} \cap \mathcal{F}$ if it is a point in $\mathcal{B}_{k}^{n} \cap \mathcal{F}$. For any $\boldsymbol{x} \in \mathcal{B}_{k}^{n} \cap \mathcal{F}$, we have $\Vert \boldsymbol{x} \Vert_{2}^{2} = k$. If $\boldsymbol{x}$ is not an extreme point of $\mathcal{D}_{k}^{n} \cap \mathcal{F}$, then there exists $\boldsymbol{y}, \boldsymbol{z} \in \mathcal{D}_{k}^{n} \cap \mathcal{F}$ and $\theta \in (0, 1)$, such that $\boldsymbol{x} = \theta \boldsymbol{y} + (1 - \theta) \boldsymbol{z}$. Since $\Vert \cdot \Vert_{2}^{2}$ is strictly convex, we can use the Jensen's inequality to derive the following contradiction:
    \begin{equation}
        k = \Vert \boldsymbol{x} \Vert_{2}^{2} < \theta \Vert \boldsymbol{y} \Vert_{2}^{2} + (1 - \theta) \Vert \boldsymbol{z} \Vert_{2}^{2} \leq k.
    \end{equation}
    Therefore, if a point is in $\mathcal{B}_{k}^{n} \cap \mathcal{F}$, then it is an extreme point of $\mathcal{D}_{k}^{n} \cap \mathcal{F}$.
    
    Next, we prove that a point is an extreme point of $\mathcal{D}_{k}^{n} \cap \mathcal{F}$ only if it is a point in $\mathcal{B}_{k}^{n} \cap \mathcal{F}$. Suppose that $\boldsymbol{x} \in (\mathcal{D}_{k}^{n} \cap \mathcal{F}) \backslash (\mathcal{B}_{k}^{n} \cap \mathcal{F})$. Let $\mathcal{M} (\boldsymbol{x}) = \{ i \in [n] \mid 0 < x_{i} < 1 \}$ and $\mathcal{M}_{i} (\boldsymbol{x}) = \{ j \in \mathcal{C}_{i} \mid 0 < x_{j} < 1 \}$, $\forall i \in [r]$. Since $\boldsymbol{x} \in (\mathcal{D}_{k}^{n} \cap \mathcal{F}) \backslash (\mathcal{B}_{k}^{n} \cap \mathcal{F})$, we know that $|\mathcal{M} (\boldsymbol{x})| \geq 2$. We consider the following two cases:
    \begin{itemize}
        \item If there exists $i \in [r]$, such that $|\mathcal{M}_{i} (\boldsymbol{x})| \geq 2$, then we can find two distinct vertices $j, l \in \mathcal{M}_{i} (\boldsymbol{x})$. Let $\delta = \min \{ x_{j}, x_{l}, 1 - x_{j}, 1 - x_{l} \}$, then we have $\boldsymbol{y} = \boldsymbol{x} + \delta (\boldsymbol{e}_{j} - \boldsymbol{e}_{l}) \in \mathcal{D}_{k}^{n} \cap \mathcal{F}$ and $\boldsymbol{z} = \boldsymbol{x} + \delta (\boldsymbol{e}_{l} - \boldsymbol{e}_{j}) \in \mathcal{D}_{k}^{n} \cap \mathcal{F}$, where $\boldsymbol{e}_{j}$ is the $j$-th vector of the canonical basis for $\mathbb{R}^{n}$. Since $\boldsymbol{x} = \frac{1}{2} \boldsymbol{y} + \frac{1}{2} \boldsymbol{z}$, we know that $\boldsymbol{x}$ is not an extreme point of $\mathcal{D}_{k}^{n} \cap \mathcal{F}$.
        \item If there does not exist $i \in [r]$, such that $|\mathcal{M}_{i} (\boldsymbol{x})| \geq 2$, then we can find two distinct set $\mathcal{M}_{i} (\boldsymbol{x})$ and $\mathcal{M}_{j} (\boldsymbol{x})$, such that $|\mathcal{M}_{i} (\boldsymbol{x})| = |\mathcal{M}_{j} (\boldsymbol{x})| = 1$. Since $\boldsymbol{x} \in [0, 1]^{n}$ and $|\mathcal{M}_{i} (\boldsymbol{x})| \leq 1$, $\forall i \in [r]$, we have $\sum_{l \in \mathcal{C}_{i} \backslash \mathcal{M}_{i} (\boldsymbol{x})} x_{l} \geq k_{i}$ and $\sum_{l \in \mathcal{C}_{j} \backslash \mathcal{M}_{j} (\boldsymbol{x})} x_{l} \geq k_{j}$. Suppose that $l \in \mathcal{M}_{i} (\boldsymbol{x})$ and $q \in \mathcal{M}_{j} (\boldsymbol{x})$. Let $\delta = \min \{ x_{l}, x_{q}, 1 - x_{l}, 1 - x_{q} \}$, then we have $\boldsymbol{y} = \boldsymbol{x} + \delta (\boldsymbol{e}_{l} - \boldsymbol{e}_{q}) \in \mathcal{D}_{k}^{n} \cap \mathcal{F}$ and $\boldsymbol{z} = \boldsymbol{x} + \delta (\boldsymbol{e}_{q} - \boldsymbol{e}_{l}) \in \mathcal{D}_{k}^{n} \cap \mathcal{F}$. Since $\boldsymbol{x} = \frac{1}{2} \boldsymbol{y} + \frac{1}{2} \boldsymbol{z}$, we know that $\boldsymbol{x}$ is not an extreme point of $\mathcal{D}_{k}^{n} \cap \mathcal{F}$.
    \end{itemize}
    Therefore, we can conclude that a point is an extreme point of $\mathcal{D}_{k}^{n} \cap \mathcal{F}$ only if it is a point in $\mathcal{B}_{k}^{n} \cap \mathcal{F}$.
\end{proof}

\section{Proof of Theorem \ref{thm4}} \label{app4}
\begin{proof}
    Let $\mathcal{M} (\boldsymbol{x}) = \{ i \in [n] \mid 0 < x_{i} < 1 \}$ and $\mathcal{M}_{i} (\boldsymbol{x}) = \{ j \in \mathcal{C}_{i} \mid 0 < x_{j} < 1 \}$, $\forall i \in [r]$. Since $\boldsymbol{x}$ is non-integral, we have $|\mathcal{M} (\boldsymbol{x})| = \sum_{i \in [r]} |\mathcal{M}_{i} (\boldsymbol{x})| \geq 2$.
        
    If there exists $i \in [r]$, such that $|\mathcal{M}_{i} (\boldsymbol{x})| \geq 2$, then we can always find two distinct vertices $j, l \in \mathcal{M}_{i} (\boldsymbol{x})$ such that $\lambda x_{j} + s_{j} \geq \lambda x_{l} + s_{l}$, where $s_{j} = \sum_{q \in [n]} a_{jq} x_{q}$, $\forall j \in [n]$. Let $\delta = \min \{ x_{l}, 1 - x_{j} \}$, $\boldsymbol{d} = \boldsymbol{e}_{j} - \boldsymbol{e}_{l}$, where $\boldsymbol{e}_{j}$ is the $j$-th vector of the canonical basis for $\mathbb{R}^{n}$, and $\hat{\boldsymbol{x}} = \boldsymbol{x} + \delta \boldsymbol{d}$. $\hat{\boldsymbol{x}}$ is still a feasible point of \eqref{p6}. To analyze the effect of the update on the objective function $g$, we consider the difference:
    
    \begin{equation}
        \begin{aligned}
                & g (\hat{\boldsymbol{x}}) - g (\boldsymbol{x}) \\
                =& 2 (x_{j} + \delta) (s_{j} - a_{jl} x_{l}) + \lambda (x_{j} + \delta)^{2} + 2 (x_{l} - \delta) (s_{l} - a_{jl} x_{j}) + \lambda (x_{l} - \delta)^{2} + 2 a_{jl} (x_{j} + \delta) (x_{l} - \delta) \\
                & - 2 x_{j} (s_{j} - a_{jl} x_{l}) - \lambda x_{j}^{2} - 2 x_{l} (s_{l} - a_{jl} x_{j}) - \lambda x_{l} ^{2} - 2 a_{jl} x_{j} x_{l} \\
                =& 2 \delta ( \lambda x_{j} + s_{j} - \lambda x_{l} - s_{l} ) + 2 (\lambda - a_{jl}) \delta^{2} \\
                \geq& 0. \label{eq1}
        \end{aligned}
    \end{equation}
    
    Hence, after the above update, the objective value $g (\hat{\boldsymbol{x}})$ is greater than or equal to the objective value $g (\boldsymbol{x})$ and the cardinality $|\mathcal{M}_{i} (\hat{\boldsymbol{x}})|$ is strictly smaller than the cardinality $|\mathcal{M}_{i} (\boldsymbol{x})|$. Repeat this update until the cardinality $|\mathcal{M}_{i} (\hat{\boldsymbol{x}})|$ is either 0 or 1.
    
    After the aforementioned iteration, there is at most one non-integral entry in the indicator vector $\hat{\boldsymbol{x}}$ corresponding to the $i$-th group. Apply the same iteration to other groups until $|\mathcal{M}_{i} (\hat{\boldsymbol{x}})| \leq 1$, $\forall i \in [r]$.
    
    After these iterations, if $|\mathcal{M} (\hat{\boldsymbol{x}})| = 0$, then we already have an integral feasible $\hat{\boldsymbol{x}}$ of \eqref{p6} such that $g (\hat{\boldsymbol{x}}) \geq g (\boldsymbol{x})$. If $\hat{\boldsymbol{x}}$ is non-integral, then $|\mathcal{M} (\hat{\boldsymbol{x}})| \geq 2$, which implies that we can always find two distinct vertices $j, l \in \mathcal{M} (\hat{\boldsymbol{x}})$ such that $\lambda \hat{x}_{j} + \hat{s}_{j} \geq \lambda \hat{x}_{l} + \hat{s}_{l}$, where $\hat{s}_{j} = \sum_{q \in [n]} a_{jq} \hat{x}_{q}$, $\forall j \in [n]$. Let $\hat{\delta} = \min \{ \hat{x}_{l}, 1 - \hat{x}_{j} \}$ and $\hat{\boldsymbol{d}} = \boldsymbol{e}_{j} - \boldsymbol{e}_{l}$. Since $\hat{\boldsymbol{x}} \in [0, 1]^{n}$ and $|\mathcal{M}_{i} (\hat{\boldsymbol{x}})| \leq 1$, $\forall i \in [r]$, we have $\sum_{l \in \mathcal{C}_{i} \backslash \mathcal{M}_{i} (\hat{\boldsymbol{x}})} \hat{x}_{l} \geq k_{i}$, $\forall i \in [r]$, which implies that $\hat{\boldsymbol{x}} + \hat{\delta} \hat{\boldsymbol{d}}$ is feasible of \eqref{p6}. Similar to \eqref{eq1}, we have the objective value $g (\hat{\boldsymbol{x}} + \hat{\delta} \hat{\boldsymbol{d}})$ is greater than or equal to the objective value $g (\hat{\boldsymbol{x}})$ and the cardinality $|\mathcal{M} (\hat{\boldsymbol{x}} + \hat{\delta} \hat{\boldsymbol{d}})|$ is strictly smaller than the cardinality $|\mathcal{M} (\hat{\boldsymbol{x}})|$. Repeat this update until the cardinality $|\mathcal{M} (\hat{\boldsymbol{x}})|$ is 0, then we obtain an integral feasible $\hat{\boldsymbol{x}}$ of \eqref{p6} such that $g (\hat{\boldsymbol{x}}) \geq g (\boldsymbol{x})$.  
\end{proof}

\section{Proof of Lemma \ref{lem1}} \label{app2}

\begin{proof}
    Let $\mathcal{M} (\boldsymbol{x}) = \{ i \in [n] \mid 0 < x_{i} < 1 \}$ and $\mathcal{M}_{i} (\boldsymbol{x}) = \{ j \in \mathcal{C}_{i} \mid 0 < x_{j} < 1 \}$, $\forall i \in [r]$. Since $\boldsymbol{x}$ is non-integral, we have $|\mathcal{M} (\boldsymbol{x})| = \sum_{i \in [r]} |\mathcal{M}_{i} (\boldsymbol{x})| \geq 2$. Considering the following two cases:

    \begin{itemize}
        \item If there exists $i \in [r]$, such that $|\mathcal{M}_{i} (\boldsymbol{x})| \geq 2$, then we can always find two distinct vertices $j, l \in \mathcal{M}_{i} (\boldsymbol{x})$ such that $\lambda x_{j} + s_{j} \geq \lambda x_{l} + s_{l}$, where $s_{j} = \sum_{q \in [n]} a_{jq} x_{q}$, $\forall j \in [n]$. Let $\hat{\delta} = \min \{ x_{l}, 1 - x_{j} \}$, $\boldsymbol{d} = \boldsymbol{e}_{j} - \boldsymbol{e}_{l}$, where $\boldsymbol{e}_{j}$ is the $j$-th vector of the canonical basis for $\mathbb{R}^{n}$. For every $\delta \in (0, \hat{\delta}]$, since $\boldsymbol{x} + \delta \boldsymbol{d}$ is still feasible of \eqref{p6} and 
        \begin{equation}
            \begin{aligned}
                    & g (\boldsymbol{x} + \delta \boldsymbol{d}) - g (\boldsymbol{x})  \\
                    =& 2 (x_{j} + \delta) (s_{j} - a_{jl} x_{l}) + \lambda (x_{j} + \delta)^{2} + 2 (x_{l} - \delta) (s_{l} - a_{jl} x_{j}) + \lambda (x_{l} - \delta)^{2} + 2 a_{jl} (x_{j} + \delta) (x_{l} - \delta) \\
                    & - 2 x_{j} (s_{j} - a_{jl} x_{l}) - \lambda x_{j}^{2} - 2 x_{l} (s_{l} - a_{jl} x_{j}) - \lambda x_{l} ^{2} - 2 a_{jl} x_{j} x_{l} \\
                    =& 2 \delta ( \lambda x_{j} + s_{j} - \lambda x_{l} - s_{l} ) + 2 (\lambda - a_{jl}) \delta^{2} \\
                    >& 0, \label{eq3}
            \end{aligned}
        \end{equation}
        we know that $\boldsymbol{d}$ is an ascent direction at $\boldsymbol{x}$.
        \item If there does not exist $i \in [r]$, such that $|\mathcal{M}_{i} (\boldsymbol{x})| \geq 2$, we can always find two distinct vertices $j, l \in \mathcal{M} (\boldsymbol{x})$ such that $\lambda x_{j} + s_{j} \geq \lambda x_{l} + s_{l}$. Let $\hat{\delta} = \min \{ x_{l}, 1 - x_{j} \}$ and $\boldsymbol{d} = \boldsymbol{e}_{j} - \boldsymbol{e}_{l}$. Since $\boldsymbol{x} \in [0, 1]^{n}$ and $|\mathcal{M}_{i} (\boldsymbol{x})| \leq 1$, $\forall i \in [r]$, we have $\sum_{l \in \mathcal{C}_{i} \backslash \mathcal{M}_{i} (\boldsymbol{x})} x_{l} \geq k_{i}$, $\forall i \in [r]$, which implies that $\boldsymbol{x} + \delta \boldsymbol{d}$, for every $\delta \in (0, \hat{\delta}]$, is still feasible of \eqref{p6}. Similar to \eqref{eq3}, we can obtain $g (\boldsymbol{x} + \delta \boldsymbol{d}) - g (\boldsymbol{x}) > 0$, for every $\delta \in (0, \hat{\delta}]$, which implies that $\boldsymbol{d}$ is an ascent direction at $\boldsymbol{x}$.
    \end{itemize}
    
    Therefore, there always exists an ascent direction at $\boldsymbol{x}$, which implies that $\boldsymbol{x}$ is not a local maximizer of \eqref{p6}.
\end{proof}

\section{Proof of Theorem \ref{thm2}} \label{app5}

\begin{proof}
    The proof follows the same argument as Theorem 5 in \citep{lu2024on}, with only minor differences in notation and the extension from the feasible set of D$k$S to that of VAC-D$k$S. As the underlying structure is preserved, the original proof applies directly. We include the adapted version here for completeness.
    
    Since $\boldsymbol{x}$ is a local maximizer of \eqref{p6} with the diagonal loading parameter $\lambda_{1}$, there exists $\epsilon > 0$ such that
    \begin{equation}
        \boldsymbol{x}^{\mathsf{T}} (\boldsymbol{A} + \lambda_{1} \boldsymbol{I}) \boldsymbol{x} \geq \boldsymbol{y}^{\mathsf{T}} (\boldsymbol{A} + \lambda_{1} \boldsymbol{I}) \boldsymbol{y},
    \end{equation}
    for every $\boldsymbol{y} \in \mathcal{D}_{\epsilon}$, where $\mathcal{D}_{\epsilon} = \{ \boldsymbol{y} \in \mathcal{D}_{k}^{n} \cap \mathcal{F} \mid \Vert \boldsymbol{x} - \boldsymbol{y} \Vert_{2} \leq \epsilon \}$.
    
    We aim to show that the same inequality holds when the diagonal loading parameter is increased to $\lambda_{2} > \lambda_{1}$. From Lemma \ref{lem1}, we know that $\boldsymbol{x}$ is integral. Then for any $\boldsymbol{y} \in \mathcal{D}_{\epsilon}$, we have
    \begin{equation}
        \begin{aligned}
            & \boldsymbol{x}^{\mathsf{T}} (\boldsymbol{A} + \lambda_{2} \boldsymbol{I}) \boldsymbol{x} - \boldsymbol{y}^{\mathsf{T}} (\boldsymbol{A} + \lambda_{2} \boldsymbol{I}) \boldsymbol{y} \\
            =& \boldsymbol{x}^{\mathsf{T}} (\boldsymbol{A} + \lambda_{1} \boldsymbol{I}) \boldsymbol{x} + (\lambda_{2} - \lambda_{1}) \Vert \boldsymbol{x} \Vert_{2}^{2} - \boldsymbol{y}^{\mathsf{T}} (\boldsymbol{A} + \lambda_{1} \boldsymbol{I}) \boldsymbol{y} - (\lambda_{2} - \lambda_{1}) \Vert \boldsymbol{y} \Vert_{2}^{2} \\
            \geq& (\lambda_{2} - \lambda_{1}) ( \Vert \boldsymbol{x} \Vert_{2}^{2} - \Vert \boldsymbol{y} \Vert_{2}^{2} ) \\
            \geq& 0,
        \end{aligned}
    \end{equation}
    where the first inequality follows from the local optimality of $\boldsymbol{x}$ under $\lambda_1$ and the last inequality holds because $\Vert \boldsymbol{z} \Vert_{2}^{2}$ is maximized over $\mathcal{D}_{k}^{n} \cap \mathcal{F}$ when $\boldsymbol{z}$ is integral.

    Therefore, $\boldsymbol{x}$ remains a local maximizer of \eqref{p6} with the diagonal loading parameter $\lambda_{2}$.
\end{proof}

\section{Proof of Theorem \ref{thm3}} \label{app3}

\begin{proof}
    The first term in \eqref{eq5} is due to the fact that there are at most $\frac{k (k - 1)}{2}$ edges in the graph and the edge weight is at most $w_{\max}$.

    The second term in \eqref{eq5} can be derived from
    \begin{equation}
    \begin{aligned}
        & \frac{\boldsymbol{x}_{Q}^{\ast \mathsf{T}} \boldsymbol{Ax}^{\ast}_{Q}}{w_{\max} k (k - 1)} \leq \frac{\boldsymbol{x}_{B}^{\ast \mathsf{T}} \boldsymbol{Ay}^{\ast}_{B}}{w_{\max} k (k - 1)} = \frac{\boldsymbol{x}_{B}^{\ast \mathsf{T}} \boldsymbol{A}_{1} \boldsymbol{y}^{\ast}_{B}}{w_{\max} k (k - 1)} + \frac{\boldsymbol{x}_{B}^{\ast \mathsf{T}} (\boldsymbol{A} - \boldsymbol{A}_{1}) \boldsymbol{y}^{\ast}_{B}}{w_{\max} k (k - 1)} \\
        \leq & \frac{\boldsymbol{x}^{\ast \mathsf{T}} \boldsymbol{A}_{1} \boldsymbol{y}^{\ast}}{w_{\max} k (k - 1)} + \frac{\boldsymbol{x}_{B}^{\ast \mathsf{T}} (\boldsymbol{A} - \boldsymbol{A}_{1}) \boldsymbol{y}^{\ast}_{B}}{w_{\max} k (k - 1)} \leq \frac{\boldsymbol{x}^{\ast \mathsf{T}} \boldsymbol{A}_{1} \boldsymbol{y}^{\ast}}{w_{\max} k (k - 1)} + \frac{\sigma_{2} (\boldsymbol{A})}{w_{\max} (k - 1)},
    \end{aligned}
    \end{equation}
    where $\boldsymbol{x}^{\ast}_{Q}$ is an optimal solution to the quadratic optimization problem \eqref{p1} and $(\boldsymbol{x}^{\ast}_{B}, \boldsymbol{y}^{\ast}_{B})$ is an optimal solution to the following bilinear optimization problem
    \begin{equation}
    \max_{\boldsymbol{x}, \boldsymbol{y} \in \mathcal{B}_{k}^{n} \cap \mathcal{F}} \boldsymbol{x}^{\mathsf{T}} \boldsymbol{A} \boldsymbol{y}.
    \end{equation}

    The third term in \eqref{eq5} can be derived from
    \begin{equation}
        \frac{\boldsymbol{x}_{Q}^{\ast \mathsf{T}} \boldsymbol{Ax}^{\ast}_{Q}}{w_{\max} k (k - 1)} \leq \frac{\sigma_{1} (\boldsymbol{A})}{w_{\max} (k - 1)},
    \end{equation}
    where $\boldsymbol{x}^{\ast}_{Q}$ is an optimal solution to the quadratic optimization problem \eqref{p1}.
\end{proof}

\section{Additional Experimental Results} \label{app6}

\begin{figure}[htb]
  \centering
  \begin{subfigure}[b]{0.45\textwidth}
    \includegraphics[width=\textwidth]{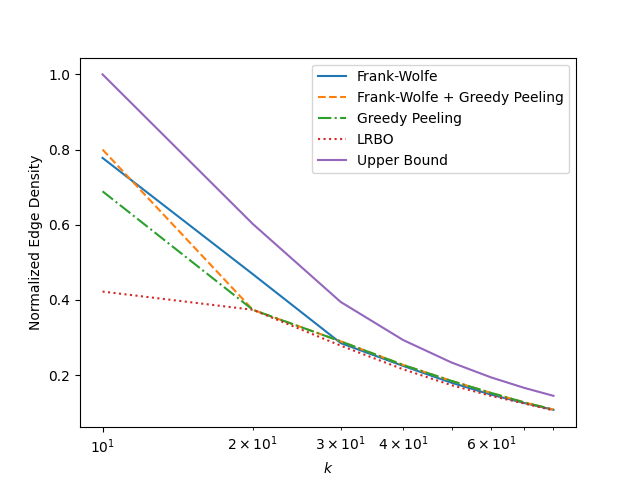}
    \caption{$\alpha = 0.25$}
  \end{subfigure}
  \hfill
  \begin{subfigure}[b]{0.45\textwidth}
    \includegraphics[width=\textwidth]{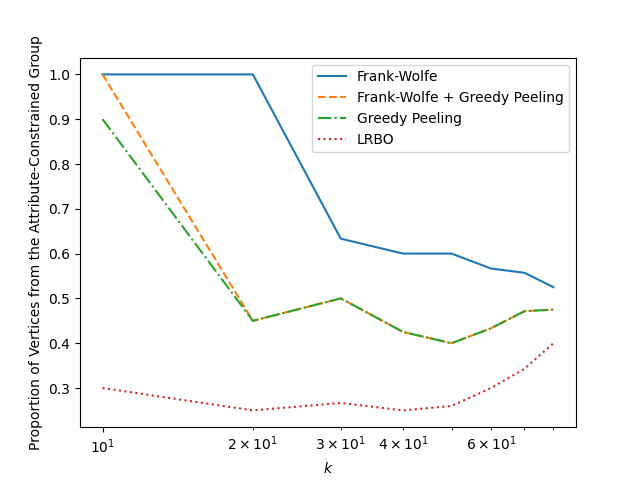}
    \caption{$\alpha = 0.25$}
  \end{subfigure}
  \vskip\baselineskip
  \begin{subfigure}[b]{0.45\textwidth}
    \includegraphics[width=\textwidth]{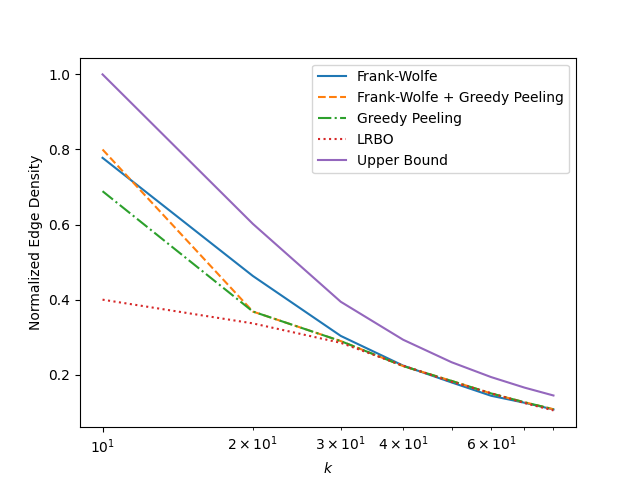}
    \caption{$\alpha = 0.5$}
  \end{subfigure}
  \hfill
  \begin{subfigure}[b]{0.45\textwidth}
    \includegraphics[width=\textwidth]{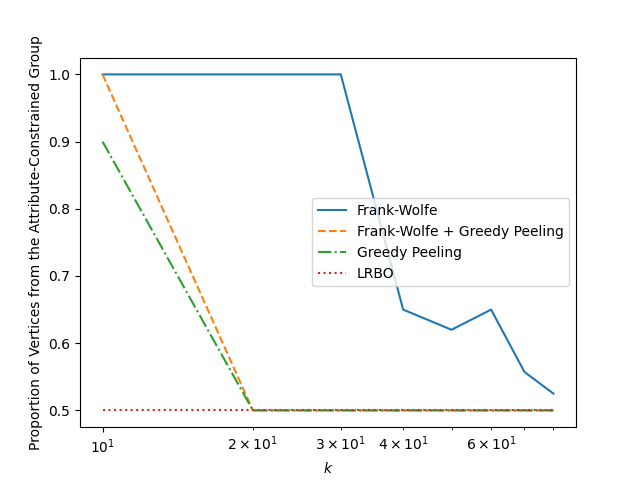}
    \caption{$\alpha = 0.5$}
  \end{subfigure}
  \caption{Normalized edge density and attribute-constrained group proportion on the Books dataset under different $\alpha$.}
  \label{fig:books}
\end{figure}

\begin{figure}[htb]
  \centering
  \begin{subfigure}[b]{0.45\textwidth}
    \includegraphics[width=\textwidth]{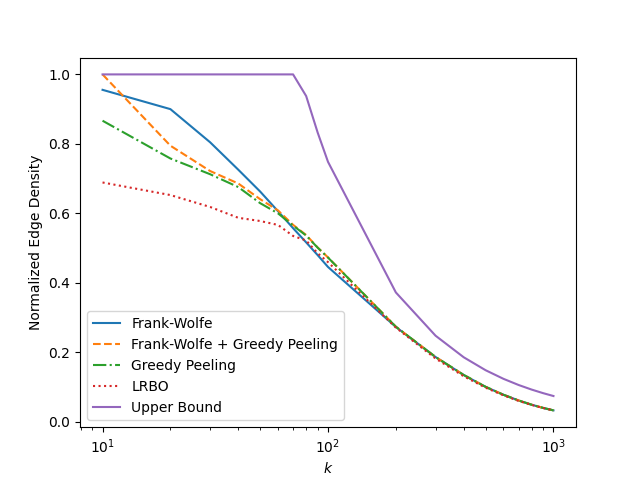}
    \caption{$\alpha = 0.25$}
  \end{subfigure}
  \hfill
  \begin{subfigure}[b]{0.45\textwidth}
    \includegraphics[width=\textwidth]{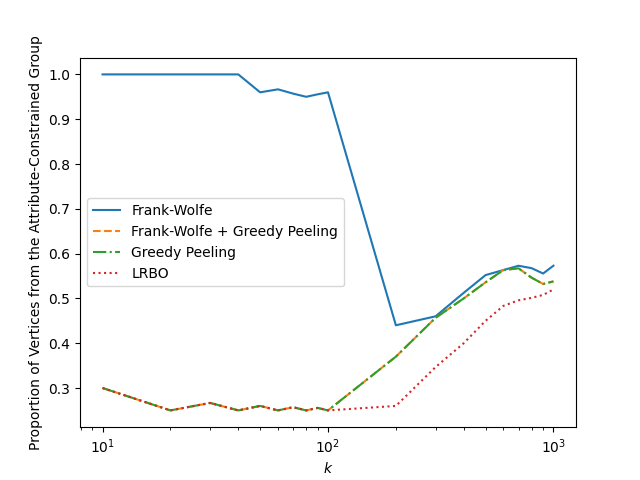}
    \caption{$\alpha = 0.25$}
  \end{subfigure}
  \vskip\baselineskip
  \begin{subfigure}[b]{0.45\textwidth}
    \includegraphics[width=\textwidth]{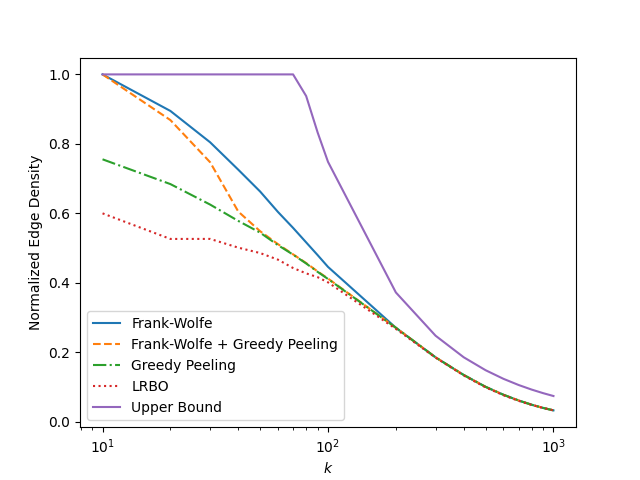}
    \caption{$\alpha = 0.5$}
  \end{subfigure}
  \hfill
  \begin{subfigure}[b]{0.45\textwidth}
    \includegraphics[width=\textwidth]{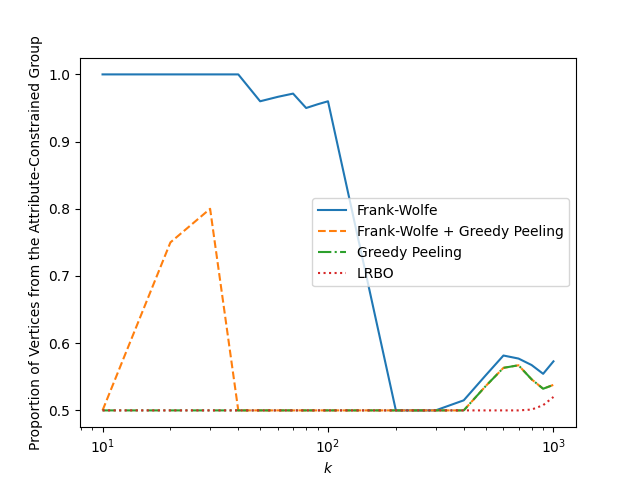}
    \caption{$\alpha = 0.5$}
  \end{subfigure}
  \caption{Normalized edge density and attribute-constrained group proportion on the Blogs dataset under different $\alpha$.}
  \label{fig:blogs}
\end{figure}

\begin{figure}[t]
  \centering
  \begin{subfigure}[b]{0.45\textwidth}
    \includegraphics[width=\textwidth]{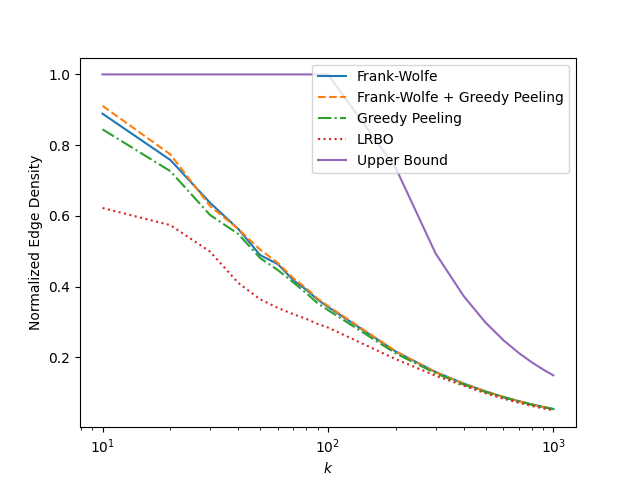}
    \caption{$\alpha = 0.25$}
  \end{subfigure}
  \hfill
  \begin{subfigure}[b]{0.45\textwidth}
    \includegraphics[width=\textwidth]{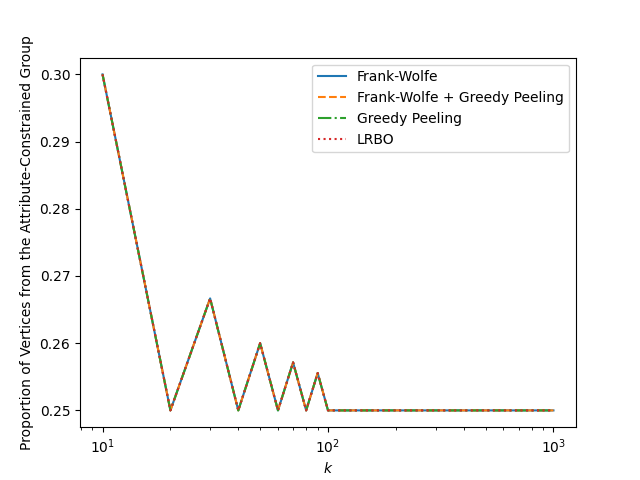}
    \caption{$\alpha = 0.25$}
  \end{subfigure}
  \vskip\baselineskip
  \begin{subfigure}[b]{0.45\textwidth}
    \includegraphics[width=\textwidth]{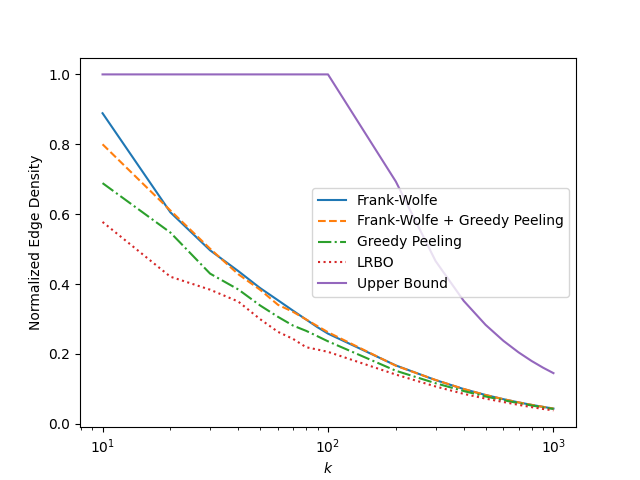}
    \caption{$\alpha = 0.5$}
  \end{subfigure}
  \hfill
  \begin{subfigure}[b]{0.45\textwidth}
    \includegraphics[width=\textwidth]{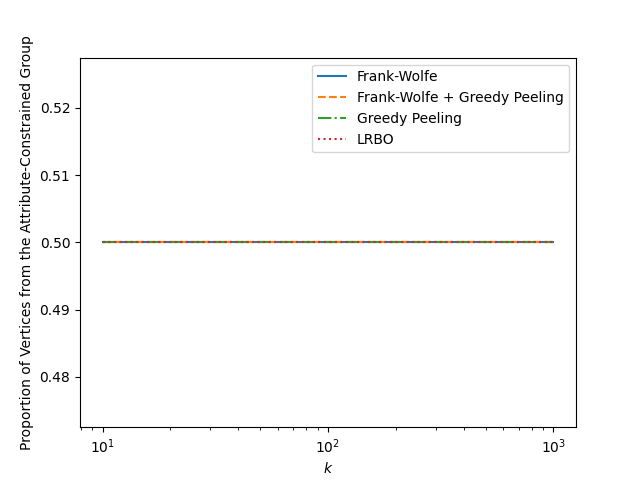}
    \caption{$\alpha = 0.5$}
  \end{subfigure}
  \caption{Normalized edge density and attribute-constrained group proportion on the GitHub dataset under different $\alpha$.}
  \label{fig:github}
\end{figure}

\begin{figure}[t]
  \centering
  \begin{subfigure}[b]{0.45\textwidth}
    \includegraphics[width=\textwidth]{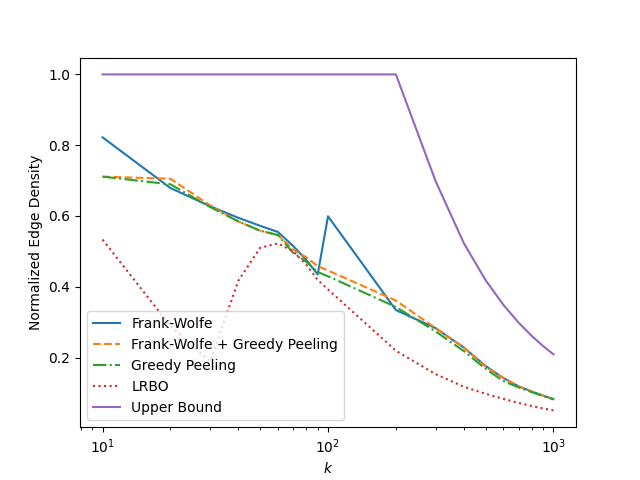}
    \caption{$\alpha = 0.25$}
  \end{subfigure}
  \hfill
  \begin{subfigure}[b]{0.45\textwidth}
    \includegraphics[width=\textwidth]{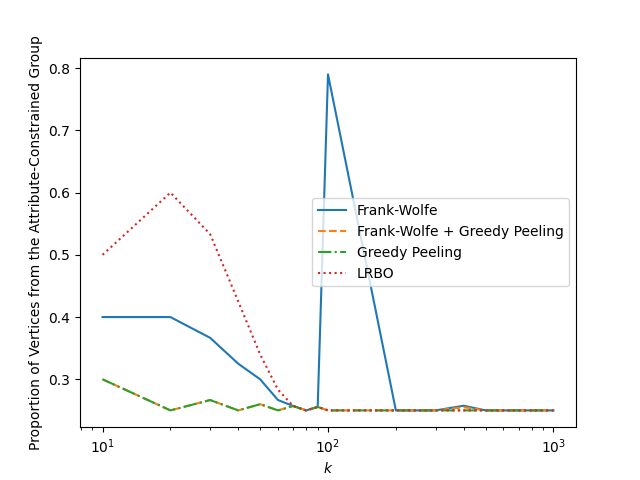}
    \caption{$\alpha = 0.25$}
  \end{subfigure}
  \vskip\baselineskip
  \begin{subfigure}[b]{0.45\textwidth}
    \includegraphics[width=\textwidth]{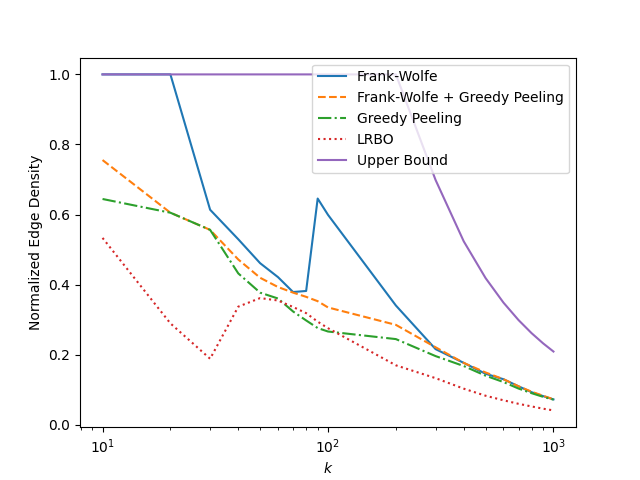}
    \caption{$\alpha = 0.5$}
  \end{subfigure}
  \hfill
  \begin{subfigure}[b]{0.45\textwidth}
    \includegraphics[width=\textwidth]{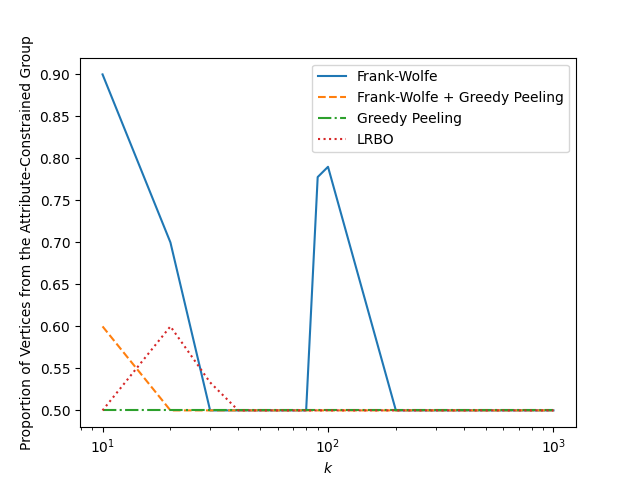}
    \caption{$\alpha = 0.5$}
  \end{subfigure}
  \caption{Normalized edge density and attribute-constrained group proportion on the Wikipedia dataset under different $\alpha$.}
  \label{fig:wiki}
\end{figure}

\end{document}